\newtheorem{definition}{Definition}
\newtheorem{lemma}{Lemma}
\begin{document}
\title{Blockchain-empowered Federated Learning for Healthcare Metaverses: User-centric Incentive Mechanism with Optimal Data Freshness}

\author{Jiawen Kang, Jinbo Wen, Dongdong Ye, Bingkun Lai, Tianhao Wu, Zehui Xiong,\\ Jiangtian Nie, Dusit Niyato, \textit{Fellow, IEEE}, Yang Zhang, Shengli Xie, \textit{Fellow, IEEE}

\thanks{

	Jiawen Kang, Jinbo Wen, Dongdong Ye, Bingkun Lai, Tianhao Wu, and Shengli Xie are with the School of Automation, Guangdong University of Technology, China (e-mail: kavinkang@gdut.edu.cn; jinbo1608@163.com; dongdongye8@163.com; bingkunlai@163.com; wutianhao32@163.com; shlxie@gdut.edu.cn).
 
	Zehui Xiong is with the Pillar of Information Systems Technology and Design, Singapore University of Technology and Design, Singapore (e-mail: zehui\_xiong@sutd.edu.sg).
 
	Jiangtian Nie and Dusit Niyato are with the School of Computer Science and Engineering, Nanyang Technological University, Singapore (e-mail: jnie001@e.ntu.edu.sg; dniyato@ntu.edu.sg).
 
	Yang Zhang is with the College of Computer Science and Technology, Nanjing University of Aeronautics and Astronautics, China (e-mail: yangzhang@nuaa.edu.cn).
    
	The work was presented in part at the 2022 IEEE International Conference on Blockchain (Blockchain) (\textit{*Corresponding author: Jiangtian Nie}).
	} 
}

\maketitle

\begin{abstract}
Given the revolutionary role of metaverses, healthcare metaverses are emerging as a transformative force, creating intelligent healthcare systems that offer immersive and personalized services. The healthcare metaverses allow for effective decision-making and data analytics for users. However, there still exist critical challenges in building healthcare metaverses, such as the risk of sensitive data leakage and issues with sensing data security and freshness, as well as concerns around incentivizing data sharing. In this paper, we first design a user-centric privacy-preserving framework based on decentralized Federated Learning (FL) for healthcare metaverses. To further improve the privacy protection of healthcare metaverses, a cross-chain empowered FL framework is utilized to enhance sensing data security. This framework utilizes a hierarchical cross-chain architecture with a main chain and multiple subchains to perform decentralized, privacy-preserving, and secure data training in both virtual and physical spaces. Moreover, we utilize Age of Information (AoI) as an effective data-freshness metric and propose an AoI-based contract theory model under Prospect Theory (PT) to motivate sensing data sharing in a user-centric manner. This model exploits PT to better capture the subjective utility of the service provider. Finally, our numerical results demonstrate the effectiveness of the proposed schemes for healthcare metaverses.

\end{abstract} 

\begin{IEEEkeywords}
Healthcare metaverse, blockchain-empowered FL, contract theory, prospect theory, age of information.
\end{IEEEkeywords}
\IEEEpeerreviewmaketitle

\section{Introduction}\label{Intro}

The recent COVID‑19 pandemic has increased the demand for remote healthcare services\cite{GARAVAND2022101029}. With the maturation and applications of metaverse technologies\cite{9880528}, digital healthcare has undergone a revolution, acting as a key force of healthcare industry evolution\cite{kurniasih2022digital}. Unlike traditional videoconferencing-based telemedicine systems, healthcare metaverses are regarded as the future continuum between healthcare industries and metaverses, which blend physical and virtual spaces and break spatial and temporal barriers, providing immersive and interactive healthcare services that meet individual needs of users (e.g., patients or medical staff) \cite{chengoden2022metaverse}. By collaboratively utilizing cutting-edge technologies, such as blockchain\cite{chengoden2022metaverse,xue2023integration}, Federated Learning (FL) \cite{wang2022development}, and digital twins\cite{chengoden2022metaverse}, healthcare metaverses have the potential to cover various applications, such as virtual comparative scanning and metaversed medical intervention \cite{wang2022development}. To build a healthcare metaverse, Internet of Medical Things (IoMT) devices (e.g., wearable devices and embedded medical devices carried by users) play an important role in communication and networking. For example, IoMT devices can collect a large amount of patients' medical data (e.g., temperature, blood pressure, and electrocardiogram) to bridge the physical space and virtual spaces, providing patients with optimal treatment strategies based on the analysis and diagnosis of multiple patients' attributes \cite{chengoden2022metaverse}. 

Although the healthcare metaverse holds great potential for transforming the healthcare ecosystem, this technology still faces many challenges. There are some challenging bottlenecks for future popularization and development: 
1) The healthcare metaverse risks user privacy leaks. Due to privacy concerns, users may be reluctant to share private sensitive data in healthcare metaverses\cite{wang2022development}, which hinders significant data analysis like pharmacodynamic analysis by using Artificial Intelligence (AI) technologies for healthcare improvement. 2) Sensing data suffers from being modified or tampered with by attackers in healthcare metaverses. Since users could have limited power in controlling their data sharing with whom and under what conditions\cite{kostick2022nfts}, the collected data are not safe in healthcare metaverses, and the incorrect or manipulated data being analyzed will cause serious consequences\cite{zhang2023multi}. 3) Due to the energy constraints of IoMT devices, users may not join metaverses or provide fresh data without a proper incentive mechanism. Since the timeliness of healthcare data can affect diagnostic results, fresh sensing data are extremely important to enhance the quality of healthcare metaverse services \cite{9881813}. Therefore, it is necessary to design a user-centric incentive mechanism for incentivizing users with fresh data in healthcare metaverses. Some efforts have been conducted for incentivizing users with data sharing\cite{zhan2020learning,8818983,nie2022blockchain}, but they ignore data freshness and the problem of information asymmetry.

To address the above challenges, in this paper, we first apply FL and cross-chain technologies to design a user-centric privacy-preserving framework for sensing data sharing in healthcare metaverses\cite{9881813}, in which FL technologies can provide privacy protection for users\cite{8832210}, and blockchain technologies can ensure data security for users and efficiently solve the problem of the single point of failure \cite{wenoptimal}. Especially, the blockchain-based healthcare metaverse 
enables users to access any digital space without the involvement of any central institution, which enhances the scalability of the healthcare metaverse\cite{chengoden2022metaverse}. To improve the service quality of healthcare metaverses, we utilize Age of Information (AoI) as a data-freshness metric to quantify sensing data freshness for healthcare metaverse services. Then, with asymmetric information, we design an AoI-based contract model to incentivize fresh data sharing among users. Considering that a service provider (i.e., an FL task publisher) may behave irrationally when facing uncertain and risky circumstances\cite{kahneman2013prospect}, we utilize Prospect Theory (PT) to capture the subjective utility of the service provider, which makes the AoI-based contract model more reliable in practice, and ultimately formulate the subjective utility as the goal function of the model\cite{huang2021efficient,8031035}. The main contributions of this paper are summarized as follows:

\begin{itemize}
    \item We design a new user-centric privacy-preserving framework for healthcare metaverses, where users can keep sensitive sensing data in the physical space for privacy protection and upload non-sensitive sensing data to the virtual space for learning-based metaverse tasks.
    \item To manage sensing data and improve privacy protection, we develop a cross-chain empowered FL framework, which can perform secure, decentralized, and privacy-preserving data training in both virtual and physical spaces through a hierarchical cross-chain architecture consisting of a main chain and multiple subchains. The cross-chain interaction is executed to complete secure model aggregation and updates.
    \item To optimize time-sensitive learning tasks in healthcare metaverses, we apply the AoI as a data-freshness metric of sensing data for healthcare metaverse services and introduce the tradeoff of the AoI and the service latency involving FL-based model training.
    \item We propose an AoI-based contract model under PT to incentivize data sharing among users. To maximize the subjective utility of the service provider subject to necessary constraints, we formulate a PT-based solution for optimal contract design. \textit{To the best of our knowledge, this is the first work to study the data freshness-based incentive mechanism under PT for healthcare metaverses}.
\end{itemize}

The remainder of the paper is organized as follows: In Section \ref{RW}, we review the related work in the literature. In Section \ref{framework}, we propose the cross-chain empowered FL framework for healthcare metaverses. In Section \ref{Problem}, we introduce the AoI and propose the AoI-based contract model under PT. In Section \ref{Optimial_Contract}, we formulate the optimal contract design under PT and propose the corresponding algorithm. Section \ref{Results} presents the security analysis of the proposed framework and numerical results of the proposed incentive mechanism and the framework. Finally, Section \ref{Conclusion} concludes this paper. 

\section{Related Work}\label{RW}
With the help of high-quality immersive content and gamification features, the healthcare metaverse can increase user engagement. For example, it can help clinicians explain complex concepts to patients, provide walk-throughs of the procedures that their patients will receive, and ensure that patients take their prescribed medications accurately\cite{chengoden2022metaverse}.
Given the revolutionary nature of the healthcare metaverse, this technology has been studied recently \cite{bansal2022healthcare, chengoden2022metaverse, ali2023metaverse}. Chengoden \textit{et al.} \cite{chengoden2022metaverse} provided a comprehensive review of the healthcare metaverse, emphasizing state-of-the-art applications, potential projects, and enabling technologies for achieving healthcare metaverses such as FL and blockchain technologies. Bansal \textit{et al.} \cite{bansal2022healthcare} provided a comprehensive survey that examines the latest metaverse development in the healthcare industry, including seven domains such as clinical care, education, and telemedicine. Ali \textit{et al.} \cite{ali2023metaverse} presented the potential of metaverse fusing with AI technologies and
blockchain technologies in the healthcare domain and proposed a metaverse-based healthcare system by integrating blockchains and explainable
AI for the diagnosis and treatment of diseases. Although the healthcare metaverse will revolutionize the healthcare sector, there are foreseeable challenges that require us to solve for the development of the healthcare metaverse, especially privacy and security problems\cite{wang2022development}.

Privacy and security are of critical importance for healthcare metaverses\cite{wang2022development}. To address the privacy concerns of sharing data, FL technologies have been applied for multiple data owners to collaboratively train a global model without sharing their raw data\cite{8994206}. Additionally, relying on encryption technologies and consensus algorithms of distributed systems, blockchain as a distributed ledger technology can effectively solve the problem of security vulnerabilities caused by centralized nodes\cite{wenoptimal}. Since FL technologies can provide privacy protection for users \cite{8994206} and blockchain technologies can ensure the data security of users\cite{zheng2018blockchain}, some works have been conducted for designing a blockchain-empowered FL framework for smart healthcare\cite{chang2021blockchain,jatain2022blockchain,wadhwa2022blockchain}. Chang \textit{et al.} \cite{chang2021blockchain} proposed a blockchain-based FL framework for smart healthcare in which the edge nodes maintain the blockchain to resist a single point of failure and IoMT devices implement the FL to make full of distributed clinical data. Jatain \textit{et al.} \cite{jatain2022blockchain} proposed a blockchain-based FL framework for the secure aggregation of private healthcare data, which can provide an efficient method to train machine learning models. Wadhwa \textit{et al.} \cite{wadhwa2022blockchain} proposed a blockchain-based FL approach for the detection of patients using IoMT devices, which provides security for the detection of patients. However, most works do not consider how to incentivize users to contribute fresh sensing data for reliable healthcare services, especially under information asymmetry. 


\begin{figure*}[t]\centering     \includegraphics[width=0.9\textwidth]{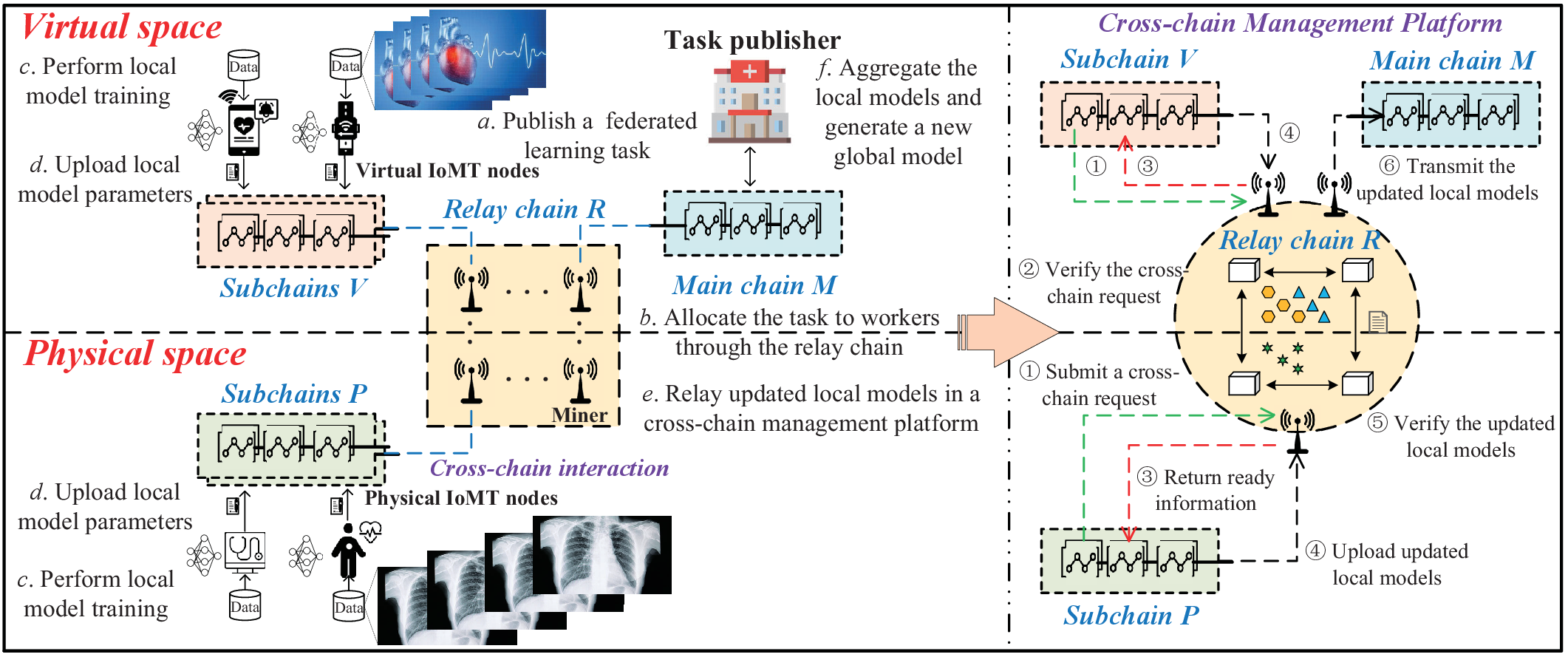}
\captionsetup{font=small}
\caption{A cross-chain empowered FL framework for healthcare metaverses.}    \label{system_model} 
\end{figure*}

To motivate users for sensing data sharing, some efforts have been conducted\cite{zhan2020learning,8818983,nie2022blockchain}. However, most works only consider a complete information scenario and ignore the problem of information asymmetry. Contract theory is a powerful tool for incentive mechanism design under information asymmetry \cite{bolton2004contract}, which has been applied in wireless communication areas\cite{hou2017incentive, Jinbo}. Some works have studied contract-based incentive mechanism design for incentivizing data sharing\cite{zhang2022toward,xuan2020incentive,karimireddy2022mechanisms}, but they ignore the data freshness. Therefore, we focus on designing a contract-based incentive mechanism with optimal data freshness. AoI has been commonly used as an effective metric to
quantify data freshness at the destination. It is defined as the elapsed time from
the generation of the latest received status update, and its minimization depends on the status update frequency \cite{kaul2012real,kosta2017age}. A few works have studied the AoI-based contract model\cite{zhou2021towards,lim2020information}. Zhou \textit{et al.} \cite{zhou2021towards} proposed a contract model considering both the AoI and the service latency to monetize contents in a realistic asymmetric information scenario. Lim \textit{et al.}\cite{lim2020information} proposed a task-aware incentive
scheme based on contract theory that can be calibrated to the model owner's weighted preferences for the AoI and the service latency. However, none of the existing work takes the healthcare metaverse scenario into account. Therefore, it
is urgent to design an AoI-based contract model for healthcare metaverses. Besides, PT considered to be a descriptive model has been widely applied to elucidate how a person's subjective attitude affects decision-making under the uncertainty and risks\cite{8031035,huang2021efficient}. Huang \textit{et al.}\cite{huang2021efficient} formulated the subjective evaluation of offloading users on the utility in computation offloading based on contract theory and PT. Rahi \textit{et al.}\cite{8031035} used PT to account for each prosumer's valuation of its gains and losses. Motivated by the above works, we use PT to capture the subjective utility of the service provider in healthcare metaverses.


\section{Cross-chain Empowered Federated Learning Framework for Healthcare Metaverses}\label{framework}

\subsection{User-centric Privacy-preserving Training Framework}
In the healthcare metaverse, a virtual IoMT node is constructed by mapping and synchronizing the data of a physical IoMT node to the virtual space\cite{9881813}. The virtual space is established on collected data from physical IoMT nodes and online generated data during node interaction and data analysis\cite{9881813}. However, due to privacy concerns, users may not be willing to upload all privacy-sensitive data to the healthcare metaverse directly. Thus, the datasets of virtual nodes are incomplete. If learning tasks are trained by virtual nodes only, the accuracy of learning models will be degraded and the generalization ability will be poor\cite{9881813}. To this end, a user-centric privacy-preserving training framework is designed for healthcare metaverses, where users can customize uploading non-sensitive sensing data to virtual spaces for learning-based metaverse tasks and applications, and keep sensitive sensing data (e.g., heartbeats and chronic conditions) locally in the physical space for strong privacy protection\cite{9881813}.

As shown in Fig. \ref{system_model}, a hierarchical cross-chain architecture for decentralized FL consists of a main chain and multiple subchains\cite{9881813}. This architecture is divided into a physical space and a virtual space. In the physical space, IoMT nodes can be sensors that are integrated into medical systems. They can measure different biological parameters and monitor real-time and healthcare-related data of users\cite{chengoden2022metaverse}, in which the subchains $P$ manage sensitive sensing data and local model updates during training. Similarly, in the virtual space, the main chain $M$ acts as a parameter server that manages global model updates, and the subchains $V$ manage non-sensitive sensing data and local model updates generated by virtual IoMT nodes that act as FL workers\cite{9881813}. 

\subsection{Cross-chain Interaction for Decentralized FL}
To further improve privacy protection of healthcare metaverses, a cross-chain empowered FL framework is further designed to ensure secure, decentralized, and privacy-preserving data training in both virtual and physical spaces via a hierarchical blockchain architecture with the main chain and multiple subchains \cite{kang2022communication}. The cross-chain interaction is executed to complete secure model aggregation and updates, which breaks data islands between the main chain and the subchains\cite{jin2022towards}. As shown in Fig. \ref{system_model}, the workflow of the proposed cross-chain empowered FL framework is presented as follows\cite{9881813}:

\emph{\textbf{Step 1: Publish a federated learning task:}} Each task publisher (e.g., a hospital or a community health center) sets up a learning task (e.g., infectious prediction  of the COVID-19 epidemic) and sends a federated learning request to the main chain $M$ (Step $a$).

\emph{\textbf{Step 2: Allocate the task to workers through the relay chain:}} The main chain $M$ sends the learning task to the relay chain $R$ which is a cross-chain management platform\cite{kang2022communication}. This platform is responsible for verifying data (e.g., model parameters), forwarding data, and bridging connections among the main chain $M$ and the subchains $P$ and $V$\cite{9881813}. The relay chain $R$ first verifies the task information and then forwards the learning task to the workers' subchains $V$ of the virtual space and the workers' subchains $P$ of the physical space, respectively (Step $b$).

\emph{\textbf{Step 3: Perform a learning task in both virtual and physical spaces:}} In the physical space, legitimate IoMT devices (e.g., smart phones, smart watches, and wearable biomedical sensors) can join the training task and perform local model training on their local datasets that involve the sensitive sensing data of users\cite{9881813}. Each physical node trains a given global model locally from the task publisher and updates local model parameters (Step $c$). Similarly, in the virtual space, legitimate virtual nodes that act as FL workers also train the given global model and update local model parameters at the same time (Step $c$).

\emph{\textbf{Step 4: Relay updated local models in the cross-chain management platform:}} When IoMT nodes in the virtual and physical spaces complete the training task, the updated local models are verified and uploaded to their subchains immediately for secure management (Step $d$). To transmit these models to the main chain $M$, the subchains first submit cross-chain requests. When the cross-chain requests are verified successfully by the miners of the relay chain $R$, the relay chain $R$ will return ready information, and the subchains are allowed to upload the updated local models. Finally, the updated local models are checked (e.g., verify that the training times of the models meet the requirement) by the relay chain $R$ and transmitted to the main chain $M$ (Step $e$).

\emph{\textbf{Step 5: Aggregate the local models and update a new global model:}} After transmitting the updated local models to the main chain $M$, the updated local models are aggregated on the main chain $M$ to generate a new global model (Step $f$). Then, the workers download the latest global model from their subchains and train the new global model for the next iteration until satisfying the given accuracy requirement\cite{9881813}. Finally, the final global model is sent back to the task publisher, and the task publisher sends monetary rewards to the workers according to their contributions\cite{jin2021cross}.

Considering that users equipped with energy-limited IoMT devices may be reluctant to contribute fresh sensing data for time-sensitive FL tasks in healthcare metaverses, a reliable incentive can be used to encourage users to share fresh sensing data, which is discussed in Section \ref{Problem}.

\section{Problem Formulation}\label{Problem}

In this section, to incentivize data sharing among users for time-sensitive FL tasks, we first introduce the AoI as an effective metric to evaluate sensing data freshness. Then, we formulate the utility functions of both workers and the service provider in healthcare metaverses.

Referring to \cite{9881813}, we consider a mixed reality based remote monitoring as an example of healthcare metaverse scenarios with a service provider and a set $\mathcal{M} = \small\{1,\ldots,m,\ldots,M\small\}$ of $M$ workers. The service provider acting as the task publisher motivates $M$ workers to participate in learning tasks. The average time of a global model iteration in the cross-chain empowered FL consists of three parts: 1) The average time of completing a global model iteration of FL (denoted as $t_u$); 2) The average time of completing a consensus process for a global model iteration among blockchains (denoted as $t_c$); 3) The average time of collecting and processing the data for model training (denoted as $c_m t,\:c_m \in \mathbb{N}$, and $t = t_u + t_c$). Considering that the FL is synchronous, $t_u$ is the same for all the workers \cite{lim2020information}, and $t_c$ is the same for each global model iteration because of using the same relay chain \cite{shen2020blockchain}. 
For each worker $m$, the time of collecting and processing data for model training is a constant \cite{lim2020information}. 


\subsection{AoI and Service Latency for Healthcare Metaverses}
AoI has been a well-accepted metric to quantify data freshness and improve performances of time-critical applications and services, especially for sensor networks\cite{lim2020information}. 
In this paper, we define AoI as the time elapsed from the beginning of data aggregation by deployed IoMT devices to the completion of FL-based training, and the service latency as the time elapsed from the initiation of the FL training request to the completion of FL-based training. We focus on the AoI and the service latency of FL with a data caching buffer in workers \cite{lim2020information}, where the low AoI determines the high quality of FL-based training for reliable healthcare metaverse services. Without loss of generality, we consider that a model training request arrives at the beginning of each epoch\cite{9881813}.
Worker $m$ periodically updates its cached data, and the periodic interval $\theta_m$ is independent of the period in which the request arrives, which is denoted as
\begin{equation}
    \begin{split}
        \theta_m = c_m t + a_m t,\: a_m \in \mathbb{N},
    \end{split}
\end{equation}
where $(a_mt)$ is the duration from finishing data collection to the beginning of the next phase of  data collection. Note that $(a_m t)$ can be service time or idle time in terms of multiple time periods\cite{lim2020information}.


Following the characteristic of the Poisson process, the probability of a request's arrival is identical across periods\cite{lim2020information}. If an FL training request is raised at the $z$-$th$ period during the data collection phase, the service latency is $c_m t + t - (z-1) t$. 
Otherwise, if a request is raised at any remaining time period in the update cycle, the service latency is $t$. 
Thus, the average service latency $D_m$ \cite{lim2020information} of the blockchain-based FL for worker $m$ is given by
\begin{equation} \label{eqation:D_s}
    \begin{split}
        \overline{D}_m = \frac{c_m}{ c_m + a_m } \left[ \frac{c_m t}{2} (c_m + 3) \right]  + \frac{a_m t}{ c_m + a_m }. \\
    \end{split}
\end{equation}

For content caching, if an FL request is raised during the data collection phase or at the beginning of phase $(c_m + 1)t$, the AoI is $t$ which is the minimum value. 
Otherwise, if a request is raised at period $(lt)$, the AoI is $[l - (c_m + 1) + 1]t$, where $l \geq (c_m + 2)t$. 
Thus, the average AoI\cite{lim2020information} for worker $m$ is given by
\begin{equation}\label{eqation:A_s}
    \begin{split}
        \overline{A}_m = \frac{t}{ c_m + a_m } \left[ c_m + 1 + \frac{(a_m - 1)(a_m + 2) }{2}   \right].  \\
    \end{split}
\end{equation}

When $t$ is fixed, the update cycle $\theta_m$ is affected by $c_m$ and $a_m$. Therefore, we consider two general cases:


\emph{\textbf{Case 1: Adjustable Idle Phase and Fixed Update Phase: }}
When $c_m = c,\:c\in \mathbb{N}$ is fixed, we have $a_m = \frac{\theta_m }{t} - c $. Replacing $a_m$ with $\theta_m$, $\overline{D}_m$ and $\overline{A}_m$ can be simplified to\cite{9881813}
\begin{equation} \label{eqation:D_s_1}
\begin{split}
\overline{D}_m(\theta_m) & = \frac{c_m}{ c_m + a_m } \left[ \frac{c_m t}{2} (c_m + 3) \right]  + \frac{a_m t}{ c_m + a_m }  \\
& = \frac{c^2 t^2 (c+3) }{2 \theta_m} + \frac{(\theta_m - ct)t }{ \theta_m }\\
& =  \frac{ct^2 (c^2 + 3c -2) }{2 \theta_m}   + 1\\
\end{split}
\end{equation}
and
\begin{equation}\label{eqation:A_s_1}
\begin{split}
\overline{A}_m(\theta_m) & = \frac{t}{ c_m + a_m } \left[ c_m + 1 + \frac{(a_m - 1)(a_m + 2) }{2}   \right]  \\
& = \frac{t^2}{\theta_m} \left[ c + 1 + \frac{  (\theta_m - ct - t)( \theta_m - ct + 2t ) }{2 t^2}    \right]  \\
& =  \frac{\theta_m}{2} + \frac{t - 2ct }{2} + \frac{ c^2 t^2 + ct^2 }{ 2 \theta_m} . \\
\end{split}
\end{equation}
Since $\frac{\mathrm{d}^2\overline{A}_m(\theta_m)}{\mathrm{d}\theta_m^2} = \frac{c^2t^2+ct^2}{\theta_m^3}>0$, $\overline{A}_m(\theta_m) $ is a convex function with respect to $\theta_m$. Besides, when $c \geq 1,\: c \in \mathbb{N}$, we have $c^2 + 3c - 2 > 0$. Thus, $\overline{D}_m(\theta_m)$ is a convex function with respect to $\theta_m$. With the update cycle $\theta_m$ decreasing, the service latency increases while the AoI may decrease. 
In other words, we can adjust the update cycle $\theta_m$ to tradeoff the AoI and the service latency\cite{lim2020information,zhang2018towards}. 

\emph{\textbf{Case 2: Adjustable Update Phase and Fixed Idle Phase: }}
When $a_m = a,\:a\in \mathbb{N}$ is fixed, we have $c_m = \frac{\theta_m }{t} - a $. Replacing $c_m$ with $\theta_m$, $\overline{D}_m$ and $\overline{A}_m$ can be simplified to\cite{9881813}
\begin{equation} \label{eqation:D_s_2}
\begin{split}
\overline{D}_m(\theta_m) & = \frac{c_m}{ c_m + a_m } \left[ \frac{c_m t}{2} (c_m + 3) \right]  + \frac{a_m t}{ c_m + a_m }  \\
& = \frac{(\theta_m - at)^3}{2t\theta_m} + \frac{3 (\theta_m - at)^2}{2\theta_m} + \frac{at^2}{\theta_m}  \\
\end{split}
\end{equation}
and
\begin{equation}\label{eqation:A_s_2}
\begin{split}
\overline{A}_m(\theta_m) & = \frac{t}{ c_m + a_m } \left[ c_m + 1 + \frac{(a_m - 1)(a_m + 2) }{2}   \right]  \\
& =  \frac{t \theta_m }{\theta_m -at} + \frac{t^2}{\theta_m -at} \left(  \frac{a^2 - a }{2} \right).  \\
\end{split}
\end{equation}
Since $\theta_m = a_m t + c_m t$, $\theta_m > a_m t $ always holds. 
When $\theta_m > at $, we have $\frac{\mathrm{d}^2\overline{D}_m(\theta_m)}{\mathrm{d}\theta_m^2} = \frac{\theta_m^3-at^3(a^2-3a+2)}{t\theta_m^3}>0$. Thus, $\overline{D}_m(\theta_m)$ is a convex function with respect to $\theta_m$.
When $\theta_m > at $ and $a>1$, we have $\frac{\mathrm{d}^2\overline{A}_m(\theta_m)}{\mathrm{d}\theta_m^2} = \frac{at^2(a-1)}{(\theta_m-at)^3}>0$. Thus, $\overline{A}_m(\theta_m)$ is also a convex function with respect to $\theta_m$. 

\subsection{Worker Utility}

The utility of worker $m$ is the difference between the received monetary reward $R_m$ and its cost $C_m$ of participating in FL training tasks, which is presented as $U_m = R_m - C_m$.
Referring to \cite{zhou2021towards}, we have $C_m = \delta_m/ \theta_m$,
where $\delta_m$ is the update cost per time and is related to data collection, computation, transmission, and consensus \cite{zhang2018towards, zhou2021towards}. Thus, the utility of worker $m$ is rewritten as 
\begin{equation}
    \begin{split}
        U_m = R_m - \frac{\delta_m}{ \theta_m}.  \\
    \end{split}
\end{equation}

Due to information asymmetry, the service provider is not aware of the update cost of each worker precisely, but it can sort the workers into discrete types by using the statistical distributions of worker types from historical data to optimize the expected utility of the service provider\cite{hou2017incentive}. Specifically, we divide the workers into different types and denote the $n$-th type worker as $\delta_n$.
The workers can be classified into a set $\mathcal{N}= \left\{ \delta_n:1 \leq n \leq N \right\}$ of $N$ types.
In non-decreasing order, the worker types are sorted as
$\delta_1 \geq \delta_2 \geq \dots \geq \delta_N$\cite{9881813}. To facilitate explanation, the worker with type $n$ is called the type-$n$ worker.
Thus, the utility of the type-$n$ worker can be rewritten as 
\begin{equation}
        U_n = R_n - \frac{\delta_n }{ \theta_n}.
\end{equation}
To simplify the description, we define the update frequency as $f_n = \frac{1}{\theta_n}$. The worker type is redefined as $\gamma_n = \frac{1}{\delta_n}$ and the worker types, i.e., $\delta_1 \geq \delta_2 \geq \dots \geq \delta_N$, are rewritten as $\gamma_1 \leq \gamma_2 \leq \dots \leq \gamma_N$\cite{9881813}. Thus, the utility of the type-$n$ worker can be rewritten as
\begin{equation}
    \begin{split}
        U_n = R_n - \frac{f_n }{ \gamma_n}.  \\
    \end{split}
\end{equation}

\subsection{Prospect Theory}
According to the conventional decision theory, the service provider is always rational and optimizes the decision-making process to maximize its own utility based on Expected Utility Theory (EUT), which uses objective probabilities to determine the weight of each possible payoff \cite{tang2018multi}. 
However, in an uncertain and risky environment, the service provider may behave irrationally and prefer to adjust original decisions in a predefined manner. Therefore, EUT is not applicable to capture risk attitudes of the service provider during the uncertain decision-making process. In the next subsection, we use both EUT and PT to capture the utility of the service provider. Firstly, we introduce the effect of two key notions from PT, i.e., \textit{probability weighting} and \textit{utility framing}.

\emph{1) Probability weighting effect:} Different from EUT, PT uses a subjective probability to determine the weight of each possible payoff. The subjective probability is a function in terms of the objective probability, which illustrates that high probability events are underestimated and low probability events are overestimated\cite{ye2022incentivizing, kahneman2013prospect}. 

\emph{2) Utility framing effect:} PT utilizes a reference point to frame the payoff of each outcome into either gain or loss. For instance, the service provider defines the goal of earning a specific amount of profits as its reference point. If its goal is not reached, it will perceive that it is a non-positive loss.
In summary, the utility of EUT is given by $U_{\text{EUT}}=\sum_{n=1}^N Q_n U_{n,\text{EUT}}$,
where $Q_n$ is the objective probability and $U_{n,\text{EUT}}$ is the outcome for the alternative $n$. Following the probability weighting effect and the utility framing effect, the utility of PT is defined as $U_{\text{PT}}=\sum_{n=1}^N H(Q_n) U_{n,\text{PT}}$,
where $H (\cdot)$ is an inverse S-shaped probability weighting function in terms of the objective probability $Q$. Referring to \cite{kahneman2013prospect}, the probability weighting function is denoted as $H =\exp(-(-\log(Q))^{\rho}) $, where $\rho$ is a rational coefficient that reveals how a person's subjective evaluation distorts objective probabilities. The more rational players have a higher $\rho$, while the more subjective players have a lower $\rho$. Thus, $U_{n,\text{PT}}$ is defined as\cite{8031035,huang2021efficient,ye2022incentivizing} 
\begin{equation} \label{PT}
U_{n,\text{PT}}= \left\{ \begin{aligned}
(U_{n,\text{EUT}}-U_{n, \text{ref}})^{\zeta^{+}},\: U_{n,\text{EUT}}\geq U_{n, \text{ref}},\\
- \eta (U_{n,\text{ref}}-U_{n,\text{EUT}})^{\zeta^{-}},\: U_{n,\text{EUT}}<U_{n,\text{ref}},\\
\end{aligned} \right.
\end{equation}
where $\zeta^{+},\: \zeta^{-} \in (0,1]$ are two weighting factors that formulate the gain and loss distortions, respectively. $\eta \geq 0$  is a loss aversion coefficient. $U_{n, \text{ref}}$ is a reference point framing the utility of $U_{n,\text{EUT}}$ into either gain or loss.

\subsection{Service Provider Utility}
Since large AoI and large service latency lead to a bad immersive experience for users and reduce the satisfaction of the service provider in healthcare metaverses \cite{jiang2021reliable}, the satisfaction function of the service provider obtained from the type-$n$ worker is defined as\cite{zhou2021towards}
\begin{equation}
    \begin{split}
        G_n = \beta g(f_n),
    \end{split}
\end{equation}
where $\beta >0$ is the unit profit for the performance and $g(\cdot)$ is the performance obtained from the type-$n$ worker, which is defined as \cite{zhou2021towards}
\begin{equation}\label{eqation:g_s}
    \begin{split}
        g(f_n) = \alpha_n (K - \overline{A}_n) + (1 - \alpha_n) ( H - \overline{D}_n ),
    \end{split}
\end{equation}
where $\alpha_n \in [0,1]$ represents the preference of AoI over service latency for the service provider to the type-$n$ worker, i.e., the larger $\alpha_n$ means that the service provider prefers the AoI more, and $K$ and $H$ are the maximum tolerant AoI and the maximum tolerant service latency, respectively.

Because of information asymmetry, the service provider only knows the number of workers and the type distribution but cannot know the private type of each worker\cite{Jinbo}, namely the exact number of workers belonging to each type, which results in the uncertainty when the service provider makes decisions. Therefore, the service provider overcomes the information asymmetry problem by adopting EUT to define its own objective utility as\cite{9881813}
\begin{equation}\label{F_EUT}
\begin{split}
U_{s, \text{EUT}}  ={\sum^N_{n=1}} M Q_{n} U_{s, n, \text{EUT}} ,\\
\end{split}
\end{equation} 
where $U_{s,n, \text{EUT}} = U_{s,n} = (G_n  - R_n)$ is the objective utility gained from type-$n$ workers and $Q_{n}$ is the probability that a worker is type-$n$. Note that $\sum_{ n=1 }^N Q_{n} = 1$. 

However, when facing uncertain and risky circumstances, the service provider may behave irrationally and have different risk attitudes. Therefore, EUT is not applicable to capture risk attitudes of the service provider during the uncertain decision-making process. In this paper, we utilize PT to further capture the utility of the service provider, which makes the contract model more acceptable in practice.
Given a reference point $U_{\text{ref}}$ for all types of workers, we convert $U_{s, n, \text{EUT}}$ into the subjective utility, which is given by\cite{huang2021efficient, ye2022incentivizing} 
\begin{equation} \label{PT}
U_{s,n,\text{PT}}= \left\{ \begin{aligned}
( U_{s,n, \text{EUT}}- U_{\text{ref}}  )^{\zeta^{+}}, \: U_{s,n, \text{EUT}}\geq U_{\text{ref}},\\
- \eta (  U_{\text{ref}} - U_{s,n, \text{EUT}}   )^{\zeta^{-}},\: U_{s,n, \text{EUT}}< U_{\text{ref}}.\\
\end{aligned} \right.
\end{equation}
Based on (\ref{PT}), the subjective utility of the service provider is presented as
\begin{equation}\label{F_PT}
\begin{split}
U_{s,\text{PT}} = {\sum^N_{n=1}} M Q_{n} U_{s,n, \text{PT}} .\\
\end{split}
\end{equation}

\subsection{Contract Formulation}

The types of workers are private information that is not visible to the service provider, namely there exists information asymmetry between the service provider and the workers. Since contract theory is a powerful tool for designing incentive mechanisms with asymmetric information\cite{hou2017incentive,Jinbo}, the service provider uses contract theory to effectively motivate workers to contribute sensing data for time-sensitive FL tasks.
Here, the service provider is the leader in designing a contract with a group of contract items, and each worker selects the best contract item according to its type. 
The contract item can be denoted as $\Phi = \left\{ (f_{n}, R_{n}), n \in \mathcal{N}   \right\}$, where $f_n$ is the update frequency of the type-$n$ worker and $R_{n}$ is the reward paid to the type-$n$ worker as an incentive for the corresponding contribution\cite{9881813,zhou2021towards}. To ensure that each worker automatically chooses the contract item designed for its specific type, the feasible contract must satisfy the following Individual Rationality (IR) and Incentive Compatibility (IC) constraints\cite{hou2017incentive}.

\begin{definition}
(Individual Rationality) The contract item that
a worker should ensure a non-negative utility, i.e.,
\begin{equation}\label{IR2}
    \begin{split}
        R_{n} - \frac{  f_n }{ \gamma_{n} }  \geq 0, \:\forall n \in \mathcal{N}.
    \end{split}
\end{equation}
\end{definition}

\begin{definition}
(Incentive Compatibility) A worker of any type
$n$ prefers to select the contract item $(f_{n}, R_{n})$ designed for
its type rather than any other contract item $(f_{i}, R_{i}),\: i\in \mathcal{N}$, and $i\neq n$, i.e.,
\begin{equation}
    \begin{split}\label{IC1}
        R_{n} - \frac{  f_n }{ \gamma_{n} }  \geq  R_{i} - \frac{  f_i }{ \gamma_{n} }  , \:\forall n, i \in \mathcal{N},\: n\neq i.
    \end{split}
\end{equation}
\end{definition}

With the IR and IC constraints, the problem of maximizing the expected utility of the service provider is formulated as
\begin{equation}\label{problem1}
    \begin{split}
        \textbf{Problem 1:} & \quad \max_{ \bm{f}, \bm{R}} U_{s, \text{PT}} \\
        \text{s.t.} & \quad \text{Constraints in}\: (\ref{IR2})\: \text{and}\: (\ref{IC1}),\\
        & \quad f_n \geq 0, R_n \geq 0, \gamma_n > 0,\: \forall n \in \mathcal{N},
    \end{split}
\end{equation}
where $\bm{f} =[f_{n}]_{1 \times N}$ and $\bm{R} =[R_{n}]_{1 \times N}$. 

\section{Optimal Contract Design under Prospect Theory}\label{Optimial_Contract}
Since there are $N$ IR constraints and $N(N-1)$ IC constraints in \textbf{Problem 1}, it is difficult to directly solve \textbf{Problem 1} with complicated constraints. Thus, we first reduce the number of attached constraints to reformulate \textbf{Problem 1}. Then, we further derive the EUT-based solution and the PT-based solution theoretically.


\subsection{Contract Reformulation with Reduced Constraints}
\begin{lemma}\label{lemma:1}
    With asymmetric information, a feasible contract must satisfy the following conditions:
    \begin{subequations}
        \begin{align}
            &R_1 -  \frac{f_1}{\gamma_1} \geq 0,\label{IR_lem1}\\
            & 0 \leq f_1 \leq f_2 \leq \cdots \leq f_N,\:0 \leq R_1 \leq R_2 \leq \cdots \leq R_N, \label{IC3_lem1}\\
            &R_n -  \frac{f_n}{\gamma_n} \geq R_{n-1} -  \frac{f_{n-1}}{\gamma_{n}} ,\:\forall n \in \left\{2,\ldots,N\right\},\label{IC1_lem1}\\
            &R_n -  \frac{f_n}{\gamma_n} \geq R_{n+1} -  \frac{f_{n+1}}{\gamma_{n}} ,\:\forall n \in \left\{1,\ldots,N-1\right\}.\label{IC2_lem1}
        \end{align}
    \end{subequations}
\end{lemma}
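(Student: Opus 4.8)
The plan is to establish that the full feasibility requirements---the $N$ IR constraints in (\ref{IR2}) and the $N(N-1)$ IC constraints in (\ref{IC1})---are equivalent to the compact set (\ref{IR_lem1})--(\ref{IC2_lem1}), so that \textbf{Problem 1} can be rewritten with far fewer constraints. I would split the argument into three reductions: (i) collapsing all IR constraints into the single one (\ref{IR_lem1}) for the least efficient type ($\gamma_1$ being the smallest); (ii) extracting the monotonicity chains (\ref{IC3_lem1}) from the pairwise IC constraints; and (iii) showing that the local downward and upward constraints (\ref{IC1_lem1})--(\ref{IC2_lem1}), together with that monotonicity, reproduce the entire global IC family. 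The stated necessity direction is mostly structural (the local ICs and the type-$1$ IR are literally subsets of the full constraints), so the substantive work is isolating monotonicity and, for the legitimacy of the reduction, proving the converse sufficiency.

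For the IR reduction I would use, for any type $n$, the global IC inequality comparing type $n$ against type $1$, namely $R_n - f_n/\gamma_n \geq R_1 - f_1/\gamma_n$. Since $\gamma_1 \leq \gamma_n$ and $f_1 \geq 0$ give $f_1/\gamma_n \leq f_1/\gamma_1$, it follows that $R_1 - f_1/\gamma_n \geq R_1 - f_1/\gamma_1 \geq 0$ whenever (\ref{IR_lem1}) holds; hence $R_n - f_n/\gamma_n \geq 0$ for every $n$. For monotonicity I would write the two opposing IC constraints for a pair of types $n$ and $j$ and add them, obtaining $(f_j - f_n)\bigl(1/\gamma_n - 1/\gamma_j\bigr) \geq 0$. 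With the ordering $\gamma_n$ non-decreasing, choosing $j > n$ forces $f_j \geq f_n$, which is the frequency chain in (\ref{IC3_lem1}); substituting this back into an adjacent IC constraint yields $R_n - R_{n-1} \geq (f_n - f_{n-1})/\gamma_n \geq 0$, the reward chain.

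The main obstacle is the local-to-global step. For the downward direction I would telescope (\ref{IC1_lem1}): summing $R_k - R_{k-1} \geq (f_k - f_{k-1})/\gamma_k$ over $k = i+1,\ldots,n$ gives $R_n - R_i \geq \sum_{k=i+1}^{n}(f_k - f_{k-1})/\gamma_k$. Because $f$ is non-decreasing and $\gamma_k \leq \gamma_n$ for $k \leq n$, each summand obeys $(f_k - f_{k-1})/\gamma_k \geq (f_k - f_{k-1})/\gamma_n$, so the sum dominates $(f_n - f_i)/\gamma_n$; this is exactly the global constraint $R_n - f_n/\gamma_n \geq R_i - f_i/\gamma_n$ for $i < n$. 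The symmetric telescoping of (\ref{IC2_lem1}), now using $\gamma_k \geq \gamma_i$ for $k \geq i$, recovers the remaining global constraints in which a lower type could consider mimicking a higher one.

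The delicate point throughout is keeping the $\gamma$-indices aligned with the direction of mimicking, since it is precisely the joint monotonicity of $f$ and $\gamma$ that makes the per-step inequalities accumulate in the correct direction; reversing a single index would break the telescoping. For that reason I would treat the upward and downward directions as two separate inductions rather than attempt a single unified chain, and I would verify the base cases ($n = i+1$, where the global constraint coincides with a local one) explicitly before invoking the telescoped bound.
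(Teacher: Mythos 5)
Your proof is correct, but note that the paper itself contains no argument for this lemma at all: its ``proof'' is a one-line citation to an external reference. What you have written is the standard self-contained derivation that such references contain, and you have executed it properly --- the IR collapse via the IC constraint of type $n$ against type $1$, monotonicity of $\bm{f}$ by adding the two crossed IC constraints to get $(f_j - f_n)\bigl(1/\gamma_n - 1/\gamma_j\bigr) \geq 0$, the reward chain from adjacent downward IC, and the local-to-global recovery by telescoping with the index alignments $\gamma_k \leq \gamma_n$ (downward, $k \leq n$) and $\gamma_k \geq \gamma_i$ (upward, $k \geq i$). A genuine strength of your write-up is that you also prove sufficiency: the lemma as stated is only a necessity claim (``a feasible contract must satisfy''), but the paper's subsequent reformulation of \textbf{Problem 1} into \textbf{Problem 2} --- where the IR and IC constraints are replaced by equalities and the constraint count drops from $N^2$ to $N+1$ --- is only legitimate if the reduced conditions imply the full family (\ref{IR2}) and (\ref{IC1}), which is exactly your telescoping step. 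Two small caveats. First, deducing $f_j \geq f_n$ from the summed inequality requires $\gamma_n < \gamma_j$ strictly; with tied types the factor $1/\gamma_n - 1/\gamma_j$ vanishes and monotonicity between the tied indices is not forced (this is harmless, since items for identical types are interchangeable, but it should be flagged, and indeed the paper's later \textbf{Lemma \ref{lemma:2}}--\textbf{Lemma \ref{lemma:4}} assume strict ordering). Second, the claim $R_1 \geq 0$ in (\ref{IC3_lem1}) should be tied explicitly to (\ref{IR_lem1}) together with the constraints $f_1 \geq 0$ and $\gamma_1 > 0$ from \textbf{Problem 1}, since it is the base of the reward chain rather than a consequence of the IC constraints alone.
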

\begin{proof}
Please refer to \cite{hou2017incentive}. 
\end{proof}

Constraint (\ref{IR_lem1}) related to the IR constraints ensures that the utility of each worker receiving the contract item of its type is non-negative. Constraints (\ref{IC3_lem1}), (\ref{IC1_lem1}), and (\ref{IC2_lem1}) are related to the IC constraints. Specifically, constraint (\ref{IC3_lem1}) indicates that a worker type with a lower cost can provide the service provider with a higher update frequency. Constraints (\ref{IC1_lem1}) and (\ref{IC2_lem1}) show that the IC constraints can be reduced as local downward incentive compatibility and local upward incentive compatibility with monotonicity, respectively\cite{hou2017incentive}. From \textbf{Lemma \ref{lemma:1}}, we can know that when the lowest-type workers satisfy the IR constraints, the other types of workers will automatically hold the IR constraints. When type-$n$ and type-$(n-1)$ workers satisfy the IC constraints, the type-$n$ and the other types of workers will automatically hold the IC constraints. Therefore, the original $(N + N(N-1) = N^2)$ IR and IC constraints are transformed into $(N + 1)$ constraints, and \textbf{Problem 1} can be reformulated as
\begin{equation}\label{P2}
\begin{split} 
   \textbf{Problem 2:} &\: \max_{\small\{(f_n,R_n)\small\}}  U_{s, \text{PT}}\\
    \text{s.t.} &\quad R_1 -  \frac{f_1}{\gamma_1} = 0,\\
    &\quad R_n -  \frac{f_n}{\gamma_n} = R_{n-1} -  \frac{f_{n-1}}{\gamma_{n}} ,\:\forall n \in \left\{2,\ldots,N \right\},\\ 
    &\quad f_n \geq f_{\text{min}}, \: R_n \geq 0, \: \forall n \in \left\{1,\ldots,N\right\},
\end{split}
\end{equation}
where $f_{\text{min}}$ is the minimum update frequency.

Based on the first two constraints of (\ref{P2}), the optimal reward $R_n^*$ can be calculated by the iterative method in a subsequent way, which is given by $R_n^* = \frac{ f_1 }{\gamma_1}+\sum_{i=1 }^n\Delta_i,\: n\in \mathcal{N}$,
where $\Delta_1 = 0$ and $\Delta_i = \frac{f_i}{\gamma_i}-\frac{f_{i-1}}{\gamma_i},\: i = 2,\ldots,N$. Therefore, we can obtain $R_n^*$ as
\begin{equation}\label{pi}
R_n^* = \left\{
\begin{aligned}
&\: \frac{f_n}{\gamma_n} + \sum_{i=1}^{n-1}\bigg(\frac{f_i}{\gamma_i}-\frac{f_i}{\gamma_{i+1}}\bigg), \:2\leq n \leq N,\\
&\:\frac{f_1}{\gamma_1}, \:n = 1.
\end{aligned}
\right.
\end{equation}

\subsection{EUT-based Solution}
According to \cite{chen2018efficient}, we consider a special case that all the worker types are uniformly distributed across all types $N$ in the EUT and PT solutions. Thus, the ratio of each worker type is identical, i.e., $Q_n = Q = 1/N$. Since the service provider is unable to know various ratios of different worker types in the initial phase, it may prefer to temporarily consider the worker types following the uniform distribution and design the contract-based incentive mechanism under this assumption\cite{9881813,Jinbo}.
By substituting (\ref{pi}) into (\ref{F_EUT}), the objective utility of the service provider in terms of $f_n$ is reformulated as
\begin{equation}
\begin{aligned}
    U_{s, \text{EUT}} (f_n) = MQ\sum_{n =1}^{N} \Big( G_n(f_n)  -  b_n f_n\Big),
\end{aligned}
\end{equation}
where 
    \begin{equation}\label{b_n}
         b_n = \left\{
        \begin{split}
            & \frac{1}{\gamma_n}+\left(\frac{1}{{\gamma}_{n}}-\frac{1}{{\gamma}_{{n}+1}}\right)(N-n),\: n < N,\\
            & \frac{ 1 }{{\gamma}_{n}},\:n = N.
        \end{split}
        \right.
    \end{equation}

To maximize $U_{s, \text{EUT}}$, we use the first-order optimality condition $\partial U_{s, \text{EUT}} /\partial f_n = 0$ and obtain $\hat{f}_{n,\text{EUT}}^{*}$. We simultaneously consider the lower bound of the update frequency $f_{ \text{min} }$ to derive the EUT-based solution of $f_n$ as follows:
\begin{equation}\label{f_k,EUT_min}
    \begin{aligned}
    f_{n,\text{EUT}}^{*} = \max \Big(\hat{f}_{n,\text{EUT}}^{*}, f_{ \text{min} } \Big).
    \end{aligned}
\end{equation}

\begin{lemma}\label{lemma:2}
  If $\gamma_1 <\cdots < \gamma_n < \cdots < \gamma_N$, then $U_{1} < \cdots < U_{n} < \cdots < U_{N}$.
\end{lemma}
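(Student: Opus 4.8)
The plan is to derive the strict ordering of the worker utilities directly from the incentive compatibility constraints already established in \textbf{Lemma \ref{lemma:1}}, rather than from the service provider's optimization problem. Recall that the utility of the type-$n$ worker is $U_n = R_n - f_n/\gamma_n$. The structural fact I would exploit is the local downward incentive compatibility constraint (\ref{IC1_lem1}), namely $R_n - f_n/\gamma_n \geq R_{n-1} - f_{n-1}/\gamma_n$ for every $n \in \{2,\ldots,N\}$, which compares the utility a type-$n$ worker obtains from its own contract item against what it would obtain from the item designed for type $(n-1)$.

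First I would rewrite this constraint as $U_n \geq R_{n-1} - f_{n-1}/\gamma_n$, and then subtract the identity $U_{n-1} = R_{n-1} - f_{n-1}/\gamma_{n-1}$ so that the reward $R_{n-1}$ cancels, leaving
\begin{equation}
U_n - U_{n-1} \geq f_{n-1}\left(\frac{1}{\gamma_{n-1}} - \frac{1}{\gamma_n}\right).
\end{equation}
The point of this manipulation is that only a term governed by the type ordering survives. Since the hypothesis gives $\gamma_{n-1} < \gamma_n$, the bracketed factor is strictly positive; combined with the feasibility bound $f_{n-1} \geq f_{\text{min}} > 0$ from \textbf{Problem 2} (equivalently, the monotonicity $0 \leq f_1 \leq \cdots \leq f_N$ of (\ref{IC3_lem1}) together with positivity of the update frequency), the right-hand side is strictly positive. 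Hence $U_n > U_{n-1}$ for each $n$, and chaining these strict inequalities yields $U_1 < U_2 < \cdots < U_N$. I note that if one instead invokes the binding equality constraints of \textbf{Problem 2}, the displayed inequality becomes an exact equality, but the conclusion is unchanged.

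The only point requiring care — and hence the main obstacle to a fully \emph{strict} conclusion — is guaranteeing $f_{n-1} > 0$ rather than merely $f_{n-1} \geq 0$: if some update frequency vanished, the argument would deliver only $U_n \geq U_{n-1}$. I would resolve this by reading the lower bound $f_n \geq f_{\text{min}}$ in \textbf{Problem 2} as strictly positive, which is consistent with the minimum-update-frequency interpretation of the contract model (a contributing worker must update at a nonzero frequency). Everything else is an elementary cancellation, so no further estimates are needed.
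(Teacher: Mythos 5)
Your proof is correct and is essentially the same argument as the one the paper relies on: the paper gives no inline proof but defers to Appendix~A of \cite{huang2021efficient}, which establishes the monotonicity of worker utilities by exactly this chaining of the local downward IC constraint $U_n \geq R_{n-1} - f_{n-1}/\gamma_n$ with the type ordering $\gamma_{n-1} < \gamma_n$ and positivity of the update frequency. Your explicit handling of the strictness caveat (needing $f_{n-1} > 0$, secured by $f_{n-1} \geq f_{\text{min}} > 0$ or by $f_n = 1/\theta_n$ with finite update cycle) is a welcome clarification rather than a deviation.
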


\begin{proof}
Please refer to Appendix A of \cite{huang2021efficient}.
\end{proof}

\begin{lemma}\label{lemma:3}
  If $\gamma_1 < \cdots < \gamma_n < \cdots < \gamma_N$ and $\frac{1}{\gamma_{n-1}}+\frac{1}{\gamma_{n+1}}-\frac{2}{\gamma_n} \geq 0,\: 1 < n < N$, then $b_1 > \cdots > b_n > \cdots > b_N$.
\end{lemma}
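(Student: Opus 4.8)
The plan is to prove the claim by a direct computation of the consecutive differences $b_n - b_{n+1}$, after first translating the hypotheses into a cleaner form. I would set $x_n = 1/\gamma_n$ for each $n$; since $\gamma_1 < \cdots < \gamma_N$ are positive, the sequence $x_1 > x_2 > \cdots > x_N > 0$ is strictly decreasing. Writing $d_n = x_n - x_{n+1}$, each $d_n > 0$, and the convexity hypothesis $\frac{1}{\gamma_{n-1}} + \frac{1}{\gamma_{n+1}} - \frac{2}{\gamma_n} \geq 0$ becomes $x_{n-1} - x_n \geq x_n - x_{n+1}$, i.e. $d_{n-1} \geq d_n$ for $1 < n < N$. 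Thus the hypotheses amount to the single clean statement that the first differences $d_1 \geq d_2 \geq \cdots \geq d_{N-1}$ form a positive, non-increasing sequence.

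Next I would rewrite $b_n$ in this notation. From (\ref{b_n}), for $n < N$ we have $b_n = x_n + \left(x_n - x_{n+1}\right)(N-n) = x_n + d_n(N-n)$, while $b_N = x_N$. For indices $1 \leq n \leq N-2$, both $b_n$ and $b_{n+1}$ use the general form, and a short computation gives
\begin{equation*}
b_n - b_{n+1} = d_n + d_n(N-n) - d_{n+1}(N-n-1) = d_n(N-n+1) - d_{n+1}(N-n-1).
\end{equation*}
Since $d_n \geq d_{n+1} > 0$ and $N-n+1 > N-n-1 \geq 0$, I would chain the bounds $d_n(N-n+1) \geq d_{n+1}(N-n+1) > d_{n+1}(N-n-1)$, which yields $b_n - b_{n+1} > 0$.

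The boundary index $n = N-1$ must be handled separately, because $b_N$ carries the distinct form $x_N$ rather than $x_N + d_N(N-N)$. Here I would compute directly $b_{N-1} - b_N = \left(x_{N-1} + d_{N-1}\right) - x_N = 2\left(x_{N-1} - x_N\right) = 2 d_{N-1} > 0$, using only the strict monotonicity of the $x_n$. Combining the two cases gives $b_1 > b_2 > \cdots > b_N$, completing the proof.

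The main obstacle I anticipate is organizational rather than conceptual: recognizing that the awkward second-difference hypothesis is exactly the statement that the first differences $d_n$ are non-increasing, and then treating the boundary term with care, since the definition of $b_n$ in (\ref{b_n}) changes form at $n = N$. Once the difference $b_n - b_{n+1}$ is written as $d_n(N-n+1) - d_{n+1}(N-n-1)$, the strict inequality is immediate from $d_n \geq d_{n+1} > 0$, so the convexity condition is used precisely to guarantee $d_n \geq d_{n+1}$ and nothing more.
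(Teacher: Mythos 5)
Your proof is correct. Note that the paper does not actually contain a proof of this lemma---it simply refers the reader to Appendix B of \cite{huang2021efficient}---so there is no internal argument to compare against; your self-contained computation supplies what the paper outsources. The translation $x_n = 1/\gamma_n$, $d_n = x_n - x_{n+1}$ correctly converts the hypotheses into the single statement that $(d_n)$ is a strictly positive, non-increasing sequence; the difference identity $b_n - b_{n+1} = d_n(N-n+1) - d_{n+1}(N-n-1)$ is right for $1 \le n \le N-2$ (the convexity condition at index $n+1$ is exactly what gives $d_n \ge d_{n+1}$ there); and the chain $d_n(N-n+1) \ge d_{n+1}(N-n+1) > d_{n+1}(N-n-1)$ is legitimately strict because $d_{n+1} > 0$ and the two coefficients differ. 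The boundary case $b_{N-1} - b_N = 2d_{N-1} > 0$ is also handled correctly and, as you observe, needs only strict monotonicity of the types, not convexity. This is presumably the same elementary difference computation carried out in the cited appendix; your version has the additional merit of isolating precisely where each hypothesis is used---convexity only to order consecutive differences, strict monotonicity to make every inequality strict.
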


\begin{proof}
Please refer to Appendix B of \cite{huang2021efficient}.
\end{proof}

\begin{lemma}\label{lemma:4}
If $Q_n = Q,\: \forall n$ and $b_1 > \cdots > b_n > \cdots > b_N$, then $U_{s,1,\text{EUT}} \leq \cdots \leq U_{s,n,\text{EUT}} \leq \cdots \leq U_{s,N,\text{EUT}}$.
\end{lemma}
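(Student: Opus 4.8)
The plan is to prove the full ordering by verifying each adjacent inequality $U_{s,n,\text{EUT}} \le U_{s,n+1,\text{EUT}}$ for $n \in \{1,\ldots,N-1\}$. Writing $U_{s,n,\text{EUT}} = G_n - R_n^{*}$ and forming the difference
\[
U_{s,n+1,\text{EUT}} - U_{s,n,\text{EUT}} = (G_{n+1} - G_n) - (R_{n+1}^{*} - R_n^{*}),
\]
I would first evaluate the reward increment from the closed form (\ref{pi}). Because the running sum in $R_n^{*}$ telescopes, almost every term cancels and one is left with the compact expression $R_{n+1}^{*} - R_n^{*} = (f_{n+1} - f_n)/\gamma_{n+1}$, which is non-negative by the monotonicity $f_1 \le \cdots \le f_N$ supplied by constraint (\ref{IC3_lem1}) of Lemma \ref{lemma:1}. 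This step is purely mechanical and I expect no trouble with it.

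The substantive step is to bound the satisfaction gain $G_{n+1} - G_n$ from below by the same quantity $(f_{n+1}-f_n)/\gamma_{n+1}$. Here I would invoke the first-order condition that defines $\hat{f}_{n,\text{EUT}}^{*}$, namely $G_n'(f_n) = b_n$, together with the concavity of $G_n$ in $f_n$ that underlies the use of $\partial U_{s,\text{EUT}}/\partial f_n = 0$ to locate the optimum. The tangent-line inequality for a concave function then gives $G_{n+1}(f_{n+1}) - G_{n+1}(f_n) \ge G_{n+1}'(f_{n+1})(f_{n+1}-f_n) = b_{n+1}(f_{n+1}-f_n)$, and reading off the definition (\ref{b_n}) shows that $\gamma_n$ increasing forces $b_{n+1} \ge 1/\gamma_{n+1}$. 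Combining these with $f_{n+1} \ge f_n$ yields $G_{n+1} - G_n \ge (f_{n+1}-f_n)/\gamma_{n+1} = R_{n+1}^{*} - R_n^{*}$, which is precisely the claim. In this argument the hypothesis $b_1 > \cdots > b_N$ keeps the effective marginal cost decreasing and hence $f_n^{*}$ non-decreasing, while the uniform weights $Q_n = Q$ enter through the very form of $b_n$ in (\ref{b_n}).

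The step I expect to be the main obstacle is the type dependence of the satisfaction function through the preference weight $\alpha_n$ in (\ref{eqation:g_s}). The tangent-line bound controls $G_{n+1}(f_{n+1}) - G_{n+1}(f_n)$, whereas the target difference is $G_{n+1}(f_{n+1}) - G_n(f_n)$; the gap is the residual $G_{n+1}(f_n) - G_n(f_n) = \beta(\alpha_{n+1}-\alpha_n)\big[(K-\overline{A}_n) - (H-\overline{D}_n)\big]$, whose sign is not fixed. To close the argument cleanly I would specialise to a common preference $\alpha_n \equiv \alpha$, so that this residual vanishes, or else impose a sign condition on $\alpha_n$ that keeps it non-negative; the boundary case $f_n^{*} = f_{\text{min}}$ in (\ref{f_k,EUT_min}), where the first-order condition becomes an inequality rather than an equality, would also be handled separately. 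The same residual governs the alternative route via revealed preference on the per-type virtual surplus $G_n(f_n)-b_n f_n$, so this difficulty is intrinsic rather than an artefact of the chosen method.
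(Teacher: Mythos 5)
Your argument is sound, but note that the paper itself contains no proof of this lemma: it simply defers to Appendix C of \cite{huang2021efficient}, so there is no in-paper derivation to compare against. Your route---telescoping the rewards in (\ref{pi}) to get $R_{n+1}^{*}-R_n^{*}=(f_{n+1}-f_n)/\gamma_{n+1}$, then combining the first-order condition $G'(\hat{f}_{n,\text{EUT}}^{*})=b_n$, concavity of the satisfaction function, monotonicity of $f_n^{*}$ (this is where the hypothesis $b_1>\cdots>b_N$ is used), and the bound $b_{n+1}\geq 1/\gamma_{n+1}$ read off from (\ref{b_n})---is the standard argument for this type of statement and is almost certainly what the cited appendix does. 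Your caveat about the type-dependent preference weight $\alpha_n$ in (\ref{eqation:g_s}) is well taken and is the one genuinely paper-specific subtlety: the lemma as stated is only guaranteed when the satisfaction function depends on the type solely through $f_n$, i.e., $\alpha_n\equiv\alpha$; this is in fact how the paper uses its own model (Algorithm \ref{AI_Contract} and the simulations take a single scalar $\alpha$), so your specialization loses nothing relative to the paper. The boundary case $f_n^{*}=f_{\text{min}}$ in (\ref{f_k,EUT_min}) is also benign for exactly the reason your monotonicity observation supplies: if $\hat{f}_{n+1,\text{EUT}}^{*}\leq f_{\text{min}}$ then all lower types are pinned at $f_{\text{min}}$ as well, so the adjacent difference of both the satisfaction terms and the rewards vanishes and the inequality holds with equality, while if only $f_n^{*}=f_{\text{min}}<f_{n+1}^{*}$ the tangent-line bound at the interior point $f_{n+1}^{*}$ still applies verbatim.
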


\begin{proof}
Please refer to Appendix C of \cite{huang2021efficient}.
\end{proof}

\textbf{Lemma \ref{lemma:2}} states that as the type of workers increases, the utilities of workers corresponding to each type also increase. \textbf{Lemma \ref{lemma:3}} and \textbf{Lemma \ref{lemma:4}} indicate that as the type of workers increases, the objective utility of the service provider gained from each type of workers also increases, where \textbf{Lemma \ref{lemma:4}} is derived based on \textbf{Lemma \ref{lemma:3}}.

\subsection{PT-based Solution}
Since the EUT-based solution is unable to capture the psychological behavior of the service provider, the EUT-based solution may be suboptimal from the perspective of the service provider. Therefore, we derive the PT-based solution to maximize the subjective utility of the service provider. We consider a special condition $\xi^{+} = \xi^{-} = 1$ to obtain a closed-form PT-based solution, which has been introduced in \cite{8031035}. Based on \textbf{Lemma \ref{lemma:4}}, the PT-based solution of $f_n$ is discussed in the following three cases.

\textit{Case 1:} When $U_{s,n,\text{PT}} \geq U_{\text{ref}}$, the subjective utility in (\ref{F_PT}) can be simplified to 
\begin{equation}\label{V_PT}
    \begin{aligned}
    U_{s,\text{PT}} = MQ\sum_{n = 1}^N  (U_{s,n, \text{PT}} - U_{\text{ref}}).
    \end{aligned}
\end{equation}
By substituting (\ref{pi}) into (\ref{V_PT}), $U_{s,\text{PT}}$ is converted into
\begin{equation}
    \begin{aligned}
    U_{s, \text{PT} }(f_n) = MQ\Bigg(\sum_{n =1}^N  G_n(f_n) - \sum_{n=1}^N b_n f_n - \sum_{n=1}^N  U_{\text{ref}}\Bigg).
    \end{aligned}
\end{equation}
According to the analysis of \emph{\textbf{Case 1}} and \emph{\textbf{Case 2}} in Section \ref{Problem}, $\partial^2 U_{s,\text{PT} }/\partial f_n^2  < 0$ holds when $c \geq 1$. Therefore, we use the first-order optimality condition $\partial U_{s, \text{PT}}/\partial f_n = 0$ to obtain the PT-based solution of $f_n$, which is given by
\begin{equation}
    \begin{aligned}
        f_{n,\text{PT}}^{*} = f_{n,\text{EUT}}^{*}.
    \end{aligned}
\end{equation}
According to \textbf{Lemma \ref{lemma:4}}, when $U_{s, 1,\text{PT}} = U_{s,1,\text{EUT}}  \geq U_{\text{ref}}$, \textit{Case 1} is satisfied.

\textit{Case 2:} When $U_{s,n,\text{PT}} < U_{\text{ref}},\:\forall n$, the subjective utility in (\ref{F_PT}) can be simplified to
\begin{equation}
    \begin{aligned}
    U_{s,\text{PT}} = \eta MQ \sum_{n = 1}^N  (U_{s,n, \text{PT}} - U_{\text{ref}}).
    \end{aligned}
\end{equation}
In \textit{Case 2}, the PT-based solution of $f_n$ is the same as the EUT-based solution of $f_n$. Based on \textbf{Lemma \ref{lemma:4}}, when $U_{s,N,\text{PT}} = U_{s,N,\text{EUT}} \leq U_{\text{ref}}$, \textit{Case 2} is satisfied. Due to the worker types following the uniform distribution, the PT-based solutions of $f_n$ in these two cases are identical. Besides, they are also equal to the EUT-based solution of $f_n$ because of the special condition $\xi^{+} = \xi^{-} = 1$.

\begin{algorithm}[t]
    \caption{Optimal Contract Design under PT}\label{AI_Contract}

    \KwIn{Initialize parameters $\small\{ \alpha, \beta, K, H, t, U_{\text{ref}}\small\}$ and worker types $\small\{\gamma_n, 1 \leq n \leq N\small\}$.}
    \KwOut{$\small\{f_{n,\text{PT}}^{*},R_n^*,\:1\leq n \leq N \small\}$.}

    Derive the EUT-based solution ${f_{n,\text{EUT}}^{*}}$ by (\ref{f_k,EUT_min}).\\
    Calculate $U_{s,1,\text{EUT}}$ and $U_{s,N,\text{EUT}}$.\\
    \If{ $U_{s,1,\text{EUT}} \geq U_{\rm{ref}}$ or $U_{s,N,\text{EUT}} \leq U_{\rm{ref}}$}
    {
        Let $f_{n,\text{PT}}^{*} = f_{n,\text{EUT}}^{*},\:\forall n$.
    }
    \Else{
        \If{$\eta == 1$}{$f_{n,\text{PT}}^{*} = f_{n,\text{EUT}}^{*},\:\forall n$.}
        \Else{
        \If{$\eta < 1$}{Find an index $m$ that satisfies $U_{s,m+1,\text{EUT}} \geq U_{\text{ref}} \geq U_{s,m,\text{EUT}}$.\\
        Calculate $f_{n,\text{PT}}^{*},\:1\leq n \leq m$ based on (\ref{f_a,PT_min}), and let $f_{n,\text{PT}}^{*} = f_{n,\text{EUT}}^{*},\: m+1 \leq n \leq N$.}
        \Else{Find the above $m$ and acquire the feasible solution $f_{n,\text{PT}}^{*}$ by using a polling method and the scheme in \cite{gao2011spectrum}.}
        \textbf{end}

        }
        \textbf{end}

    }
    \textbf{end}
    
    For each worker type $n$, substitute $f_{n,\text{PT}}^{*}$ into (\ref{pi}) to calculate $R_n^*$.\\
    \textbf{return} $\small\{f_{n,\text{PT}}^{*},R_n^*,\:1\leq n \leq N \small\}$.
\end{algorithm}

\begin{algorithm}[t]
\caption{Training Process for Vertical FL}\label{Al_FL}

\KwIn{Training dataset $\boldsymbol{X}^{\textrm{train}}$\ and an initial global model $\{ \boldsymbol{w}_{\text A},\boldsymbol{w}_{\text B}\}$.}
\KwOut{Final global model $\{ \boldsymbol{w}_{\text A}^{\textrm{*}},\boldsymbol{w}_{\text B}^{\textrm{*}}\}$.}

\For{\normalfont{each epoch} $t=1,2,\ldots,T$}{
    \For{\normalfont{each batch of training data}           $\boldsymbol{x}^{\text{batch}}\in\boldsymbol{X}^{\textrm{train}}$}{
        Split the data into two parts:$\{\boldsymbol{x}_\text{A}^{\textrm{batch}},\boldsymbol{x}_\text{B}^{\textrm{batch}}\}$.
        
        Server sends public keys to clients A and B.
        
        Client A sends intermediate information encrypted by using the homomorphic encryption algorithm \textit{Paillier}\cite{paillier1999public} to client B: 
        B$ \gets  \phi_{\text A}(\boldsymbol{x}_\text{A}^{\textrm{batch}},\boldsymbol{w}_{\text A})$.
        
        Client B sends encrypted intermediate information \cite{paillier1999public} to client A: 
        A$ \gets \phi_{\text B}(\boldsymbol{x}_B^{\textrm{batch}},\boldsymbol{w}_{\text B})$.
        
        Clients A and B compute the encrypted local gradient, respectively.
        
        Each client adds a random number and sends the encrypted gradient to the server:
        Server$\gets (\frac{\partial l}{\partial \boldsymbol{w}_{\text A}} + {D}_{\text A})$,
        Server$\gets ( \frac{\partial l}{\partial \boldsymbol{w}_{\text B}} + {D}_{\text B})$.
        
        Server decrypts the gradients and sends them back to clients A and B.
        
        Update local models $\{ \boldsymbol{w}_{\text A},\boldsymbol{w}_{\text B}\}$.
    }
}
\end{algorithm}

\begin{table}[t]\label{parameter}
	\renewcommand{\arraystretch}{1}
        \captionsetup{font = small}
	\caption{ Key Parameters in the Simulation. }\label{table} \centering 
	\begin{tabular}{m{4.7cm}<{\raggedright}|m{2.7cm}<{\centering}}	 	
		\hline		
		\textbf{Parameters} & \textbf{Setting}\\	
		\hline
		Time taken for completing a global iteration and the consensus process $t$ &  $2\: \rm{s}$\\	
		\hline
		Unit of time taken for data collection and process $c$  &  $[1,13]$  \\	
		\hline
		Duration from finishing data collection to the beginning of the next data collection phase $a$  & $[1,13]$  \\
		\hline
		Unit profit for the performance $\beta$  &  $\left\{ 1, 5 \right\}$  \\	
		\hline		
		Maximum tolerant AoI $K$ &  $200\: \rm{s}$\\	
		\hline		
		Maximum tolerant service latency $H$ &  $50\: \rm{s}$\\
		\hline	
	\end{tabular}\label{table3}
\end{table}

\textit{Case 3:} This case is much more complex and integrated with the above two cases. Considering $U_{s,N,\text{PT}} \geq \cdots \geq U_{s,b,\text{PT}} \geq \cdots \geq U_{s, m+1,\text{PT}} \geq U_{\text{ref}} \geq U_{s,m,\text{PT}} \geq \cdots \geq U_{s,a,\text{PT}} \geq \cdots \geq U_{s,1,\text{PT}}$, the subjective utility in (\ref{F_PT}) can be rewritten as
\begin{equation}
    \begin{aligned}
    U_{s,\text{PT}}  &= MQ \bigg(\eta  \sum_{a = 1}^m U_{s,a,\text{PT}} +  \sum_{b = m+1}^N U_{s,b,\text{PT}}-\eta  m U_{\text{ref}}\\
    &\quad\quad\quad\:\:\:- (N-m)U_{\text{ref}}\bigg) \\
    & = MQ\bigg(\sum_{a = 1}^m \eta  G_a(f_a) -  \sum_{a = 1}^m  d_a f_a + \sum_{b = m+1}^N  G_b(f_b)\\
    &\quad\qquad\:\:\:-  \sum_{b = m+1}^N  d_b f_b -\eta  m U_{\text{ref}}- (N-m)U_{\text{ref}}\bigg),
    \end{aligned}
\end{equation}
where $d_a =  \frac{m-a+1}{\gamma_a}  - \frac{m-a}{\gamma_{a+1}}   +  \frac{N-m}{\gamma_a} - \frac{N-m}{\gamma_{a+1}}  $, $d_b = \frac{N-b+1}{\gamma_b} - \frac{N-b}{\gamma_{b+1}}$ with $m+1 \leq b < N$, and $d_b = \frac{1}{\gamma_b}$ with $b = N$.

For $1 \leq a \leq m$, if $\partial^2 U_{s,\text{PT} }/\partial f_a^2  < 0$, we will use the first-order optimality condition  $\partial U_{s, \text{PT}}/\partial f_a = 0$ and obtain the optimal PT-based solution, i.e., $\hat{f}_{a,\text{PT}}^{*}$. 
We simultaneously consider the lower bound of the update frequency $f_{ \text{min} }$ to derive the PT-based solution of $f_a$, which is given by
\begin{equation}\label{f_a,PT_min}
    \begin{aligned}
    f_{a,\text{PT}}^{*} = \max \Big(\hat{f}_{a,\text{PT}}^{*},f_{ \text{min} } \Big).
    \end{aligned}
\end{equation}
Similarly, taking the first-order and second-order derivatives of $U_{s, \text{PT}}$ concerning $f_{b,\text{PT}}, \:m+1 \leq b \leq N$, the PT-based solution of $f_b$ is given by
\begin{equation}
    \begin{aligned}
    f_{b,\text{PT}}^{*} = f_{b,\text{EUT}}^{*}.
    \end{aligned}
\end{equation}

For \textit{Case 3}, we summarize that
\begin{itemize}
    \item If $\eta = 1$, then $f_{n,\text{PT}}^{*} = f_{n,\text{EUT}}^{*}, \: \forall n $.
   
    \item If $\eta < 1$, then $f_{n,\text{PT}}^{*} \leq f_{n,\text{EUT}}^{*}$ with $1 \leq n \leq m$, and $f_{n, \text{PT} }^{*} = f_{n,\text{EUT}}^{*}$ with $m+1 \leq n \leq N$. Therefore, by seeking $U_{s, m+1,\text{EUT}} \geq U_{\text{ref}} \geq U_{s, m,\text{EUT}}$, the value of $m$ can be confirmed.
   
    \item If $\eta > 1$, then $f_{n,\text{PT}}^{*} \geq f_{n,\text{EUT}}^{*}$ with $1 \leq n \leq m$, and $f_{n,\text{PT}}^{*} = f_{n,\text{EUT}}^{*}$ with $m+1 \leq n \leq N$. When $f_{n,\text{PT}}^{*} \geq f_{m+1,\text{EUT}}^{*}$, the sub-sequences $\small\{f_{n, \text{PT} }^{*}\small\}$ may not follow the essential monotonicity constraint of $f_n$, and they are adjusted by using the scheme in \cite{gao2011spectrum} to meet the demand of \textbf{Lemma \ref{lemma:1}}. Besides, the value of $m$ is determined by using a simple polling method, which consists of the following steps:\\
    \textit{ Step 1: Initialize $m = 1$}.\\
    \textit{ Step 2: Calculate $f_{n,\text{PT}}^{*},\: 1 \leq n \leq m$ based on (\ref{f_a,PT_min}), and let $f_{n,\text{PT}}^{*} = f_{n,\text{EUT}}^{*},\: m+1 \leq n \leq N$.} The scheme in \cite{gao2011spectrum} is used to adjust $\small\{f_{n,\text{PT}}^{*}\small\}$ when necessary.\\
    \textit{ Step 3: Evaluate whether $U_{s, m+1,\text{PT}} \geq U_{\text{ref}} \geq U_{s, m,\text{PT}}$.} If yes, $m$ is confirmed. Otherwise, $m = m+1$.\\
    \textit{ Step 4: Evaluate whether $m < N$.} If yes, go to \textit{Step 2}. Otherwise, the method is terminated.
\end{itemize}

Motivated by the above analysis, the detailed optimal contract design is shown in \textbf{Algorithm \ref{AI_Contract}}. Firstly, the EUT-based solution $f_{n,\text{EUT}}^*$ can be obtained by (\ref{f_k,EUT_min}). Then, based on the above three cases, we compare the sizes of $U_{s,n,\text{EUT}}$ and $U_{\text{ref}}$ sequentially to obtain the optimal PT-based solution $f_{n,\text{PT}}^*$. Finally, by substituting $f_{n,\text{PT}}^*$ into (\ref{pi}), the optimal rewards $R_n^*$ can be calculated. In particular, the computational complexity of \textbf{Algorithm \ref{AI_Contract}} in the worst case is $\mathcal{O}(N(N-1))$, which emphasizes that we can use \textbf{Algorithm \ref{AI_Contract}} to find the optimal contract under PT for all cases that are analyzed above.

\section{Security Analysis and Numerical Results}\label{Results}

In this section, we analyze the security of the cross-chain empowered FL framework and evaluate the performance of the proposed incentive mechanism and the framework. For the simulation setting of the proposed incentive mechanism, we consider $M = 10$ workers and the worker types following the uniform distribution that is distributed in the range of $[0.001,0.01]$. Similar to  \cite{9881813,lim2020information,zhou2021towards,huang2021efficient,ye2022incentivizing}, the main parameters are listed in Table \ref{table3}. 

For evaluating the performance of the proposed incentive mechanism, since \textbf{\emph{Case 2}} and \textbf{\emph{Case 1}} have similar conclusions, we focus our analysis on \textbf{\emph{Case 2}}, i.e., an adjustable update phase and a fixed idle phase. We use MATLAB to conduct experiments and compare the proposed Contract-based incentive mechanism with Asymmetric information (CA) with other incentive mechanisms: i) Contract-based incentive mechanism with Complete information (CC)\cite{hou2017incentive} that the private information of workers (i.e., worker types) is known by the service provider; ii) Contract-based incentive mechanism with Social maximization (CS) \cite{xiong2020multi} that the service provider maximizes social welfare with information asymmetry; iii) Stackelberg Game-based incentive mechanism (SG) with information asymmetry \cite{ye2022incentivizing} that the service provider acting as the leader is not aware of the exact update cost of workers acting as the followers. 

For evaluating the performance of the cross-chain empowered FL framework, we implement this framework by using PySyft based on public datasets\footnote{The public datasets of wisconsin diagnostic breast cancer: \url{ https://goo.gl/U2Uwz2}} of UCI and the Fisco Bcos blockchain with a cross-chain platform named WeCross, which uses the Two-Phase Commit (2PC) protocol as the cross-chain consensus algorithm\cite{kang2022communication}, and the cross-chain system is run on VMware Workstation Pro and the operating system is Ubuntu 22.04 LTS. For FL training, we use Python 3.7.0 running on CPU intel i7-12700 and DDR4 16G RAM to execute tasks on clients A and B. The detailed training process of FL is shown in \textbf{Algorithm \ref{Al_FL}}.

\begin{figure}[t]
	\centering
	\subfloat[ The utility of the service provider.]
	{\includegraphics[width=0.24\textwidth]{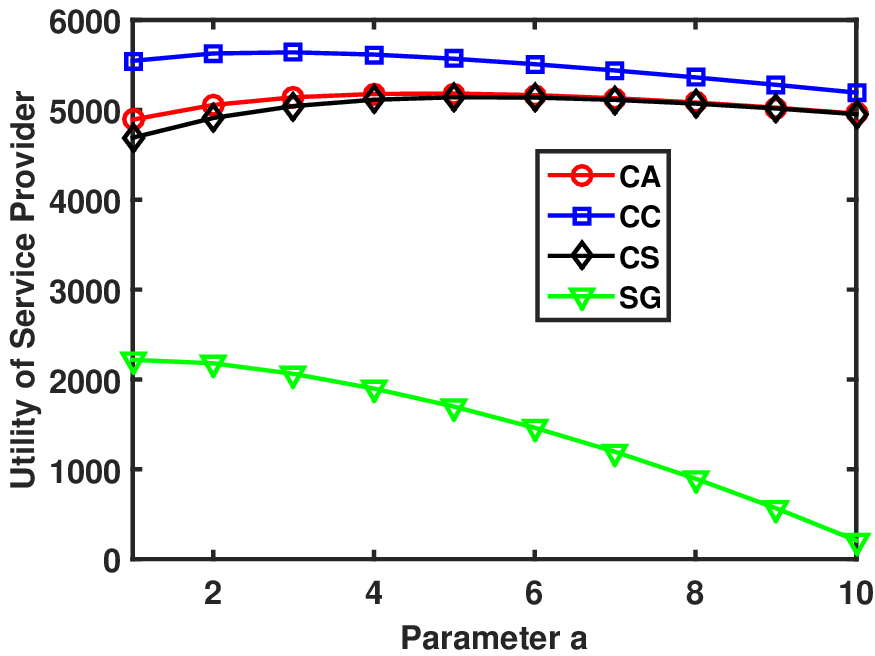}\label{fig:case2_u_s}}
	\subfloat[  Utilities of workers.]
	{\includegraphics[width=0.24\textwidth]{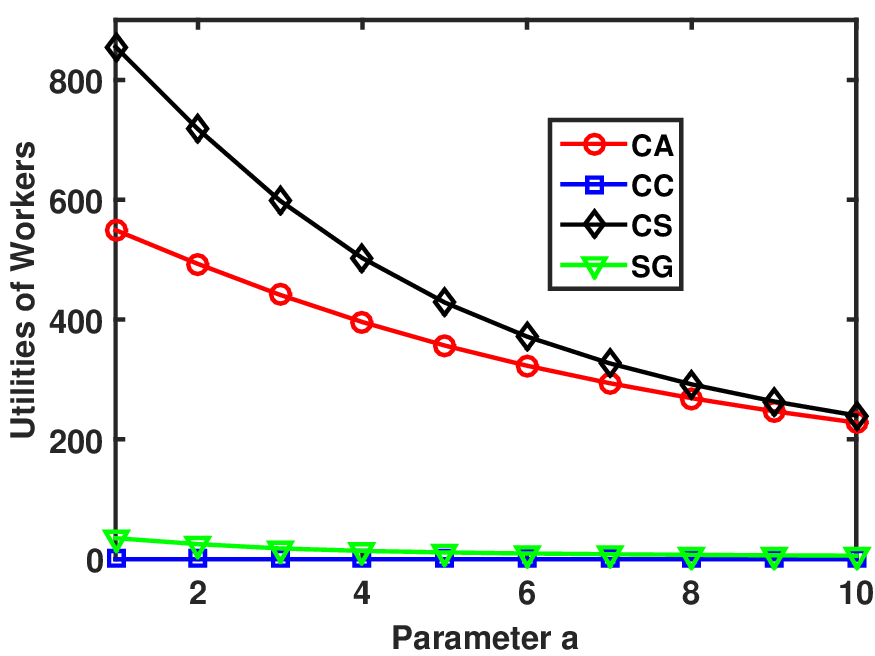}\label{fig:case2_u_n}}
        \captionsetup{font = small}
	\caption{ Utilities of the service provider and workers under different idle duration parameters $a$ in \textbf{\emph{Case 2}}.}\label{fig:U_parameter_a}
\end{figure}
\begin{figure}[t]
	\centering
	\subfloat[ The utility of the service provider.]
	{\includegraphics[width=0.24\textwidth]{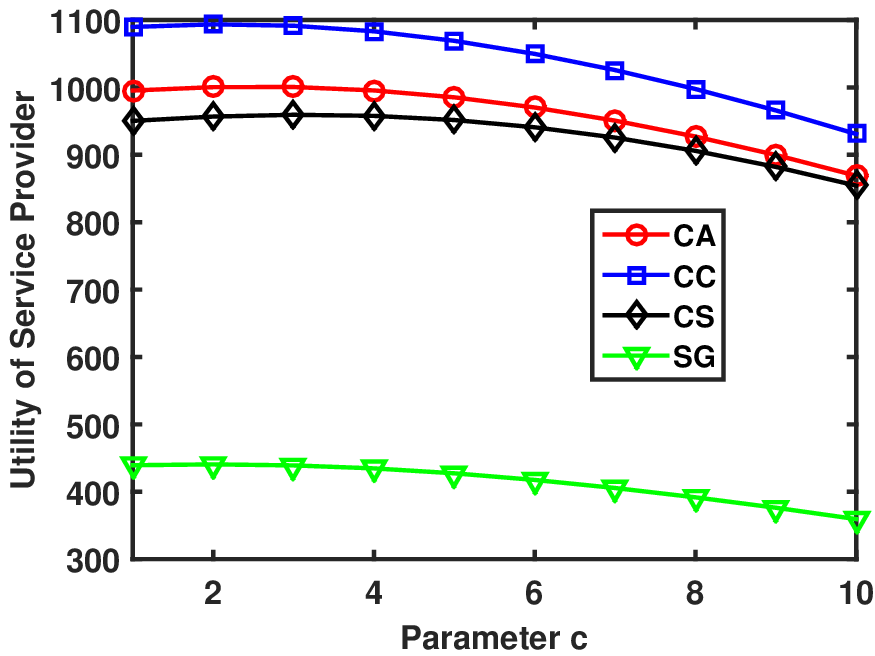}\label{fig:case1_u_s}}
	\subfloat[ Utilities of workers.]
	{\includegraphics[width=0.24\textwidth]{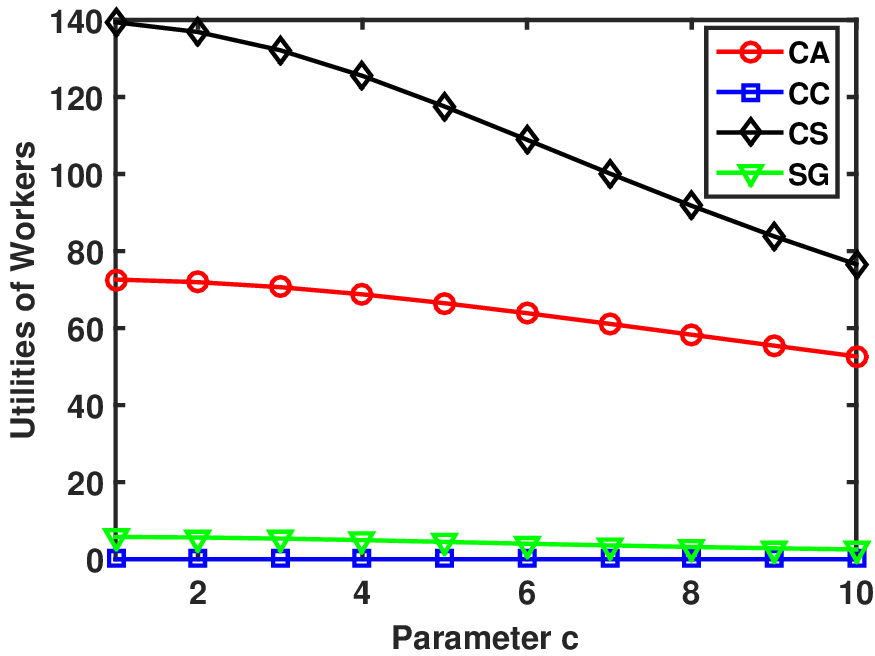}\label{fig:case1_u_n}}
        \captionsetup{font = small}
	\caption{ Utilities of the service provider and workers under different update duration parameters $c$ in \textbf{\emph{Case 1}}.}\label{fig:U_parameter_c}
\end{figure}

\subsection{Security Analysis}
The cross-chain empowered FL framework has the defense ability against many conventional security attacks through blockchain technologies and FL technologies, which satisfies the following security requirements:
\begin{enumerate}[1)]
\item \emph{Privacy protection for users:} With the role of the user-centric privacy-preserving training framework, users can keep sensitive data in the physical space and customize uploading non-sensitive data to virtual spaces for learning-based tasks, thus protecting user privacy effectively.
\item \emph{Without the intervention of the only trusted third party:} Cross-chain interactions are completed in the cross-chain management platform without relying on a third party, thus making the system scalable and robust. Note that the interaction protocol design is based on the 2PC protocol\cite{kang2022communication}. Specifically, the 2PC protocol, as a widely used coordination protocol in distributed systems, can enable secure and efficient cross-chain interactions, which is important in the proposed framework for cross-chain interactions to ensure consistency and reliability.
\item \emph{Data authentication and unforgeability:} We use the Practical Byzantine Fault Tolerance (PBFT) consensus algorithm in the hierarchical cross-chain architecture for lightweight consensus\cite{li2020scalable}. With the role of PBFT, all data are strictly audited and authenticated by delegates (i.e., miners). Besides, because of the decentralized nature of consortium blockchains combined with digitally signed transactions, attackers cannot impersonate users or compromise the system\cite{kang2017enabling}, thus ensuring data unforgeability.
\end{enumerate}

\begin{figure}[t]
	\centering
	\subfloat[ Update frequency.]
	{\includegraphics[width = 0.24\textwidth]{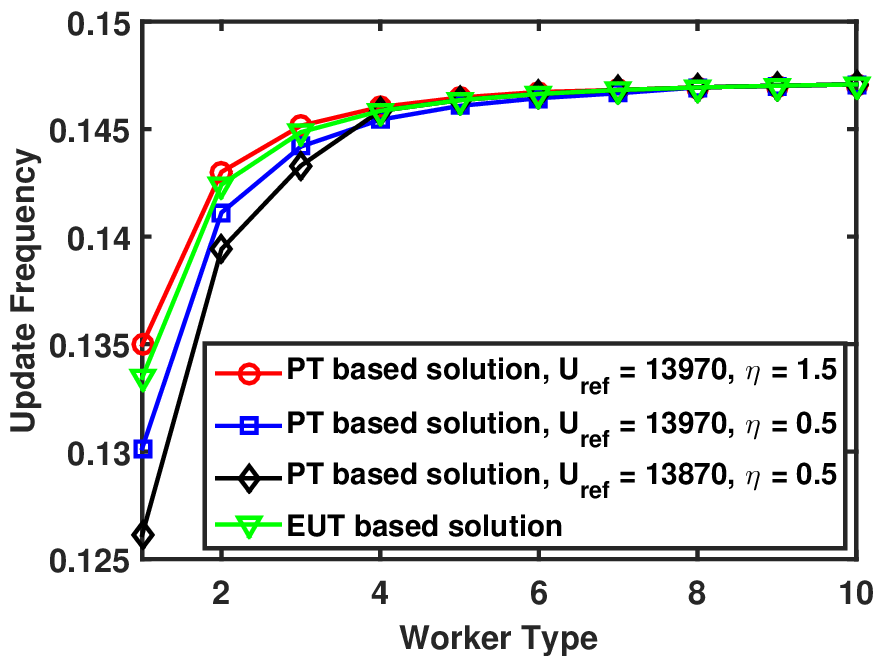}\label{solution_uf_c}}
	\subfloat[  Reward.]
	{\includegraphics[width = 0.24\textwidth]{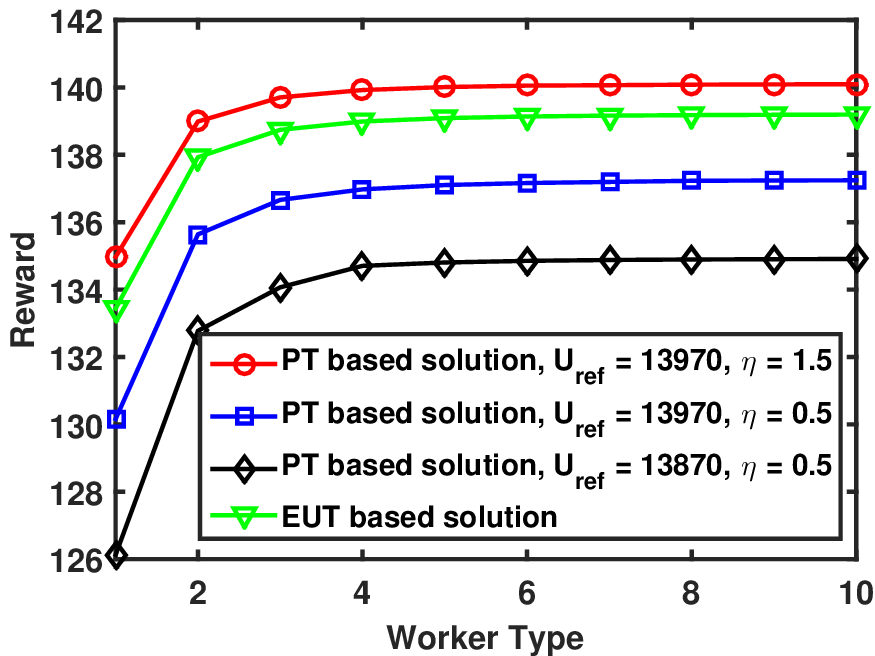}\label{solution_reward_c}}
        \captionsetup{font = small}
	\caption{ Contract items under different preference parameters $U_{\text{ref}}$ and $\eta$ in \textbf{\emph{Case 2}}.}\label{fig:PT_solution_c}
\end{figure}
\begin{figure}[t]
	\centering
	\subfloat[ Update frequency.]
	{\includegraphics[width=0.24\textwidth]{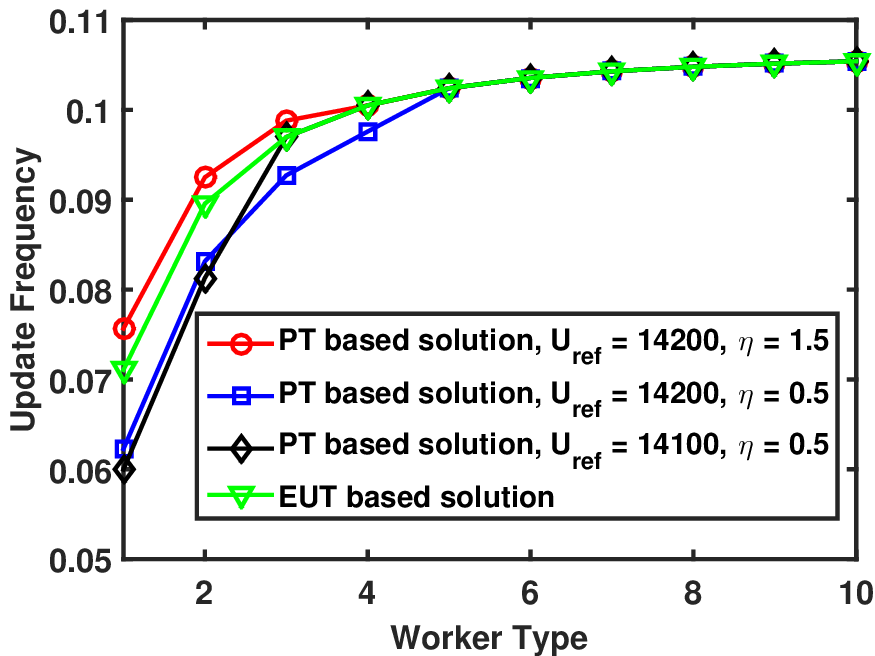}\label{solution_uf_a}}
	\subfloat[  Reward.]
	{\includegraphics[width=0.24\textwidth]{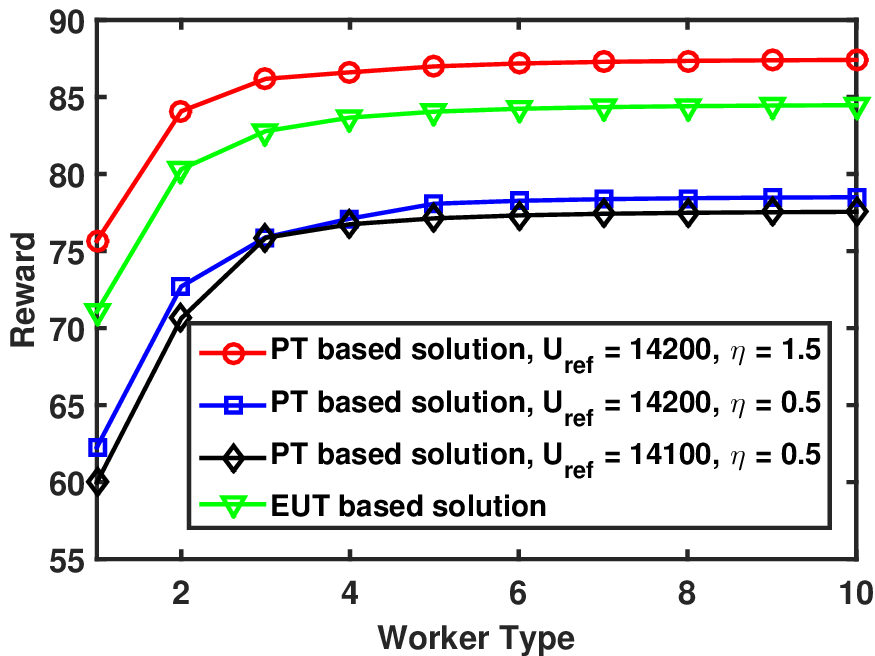}\label{solution_reward_a}}
        \captionsetup{font = small}
	\caption{ Contract items under different preference parameters $U_{\text{ref}}$ and $\eta$ in \textbf{\emph{Case 1}}.}\label{fig:PT_solution_a}
\end{figure}

\begin{figure}[t]
	\centering
	\subfloat[The utility of the service provider.]
	{\label{fig:PTUs_a}\includegraphics[width = 0.24\textwidth]{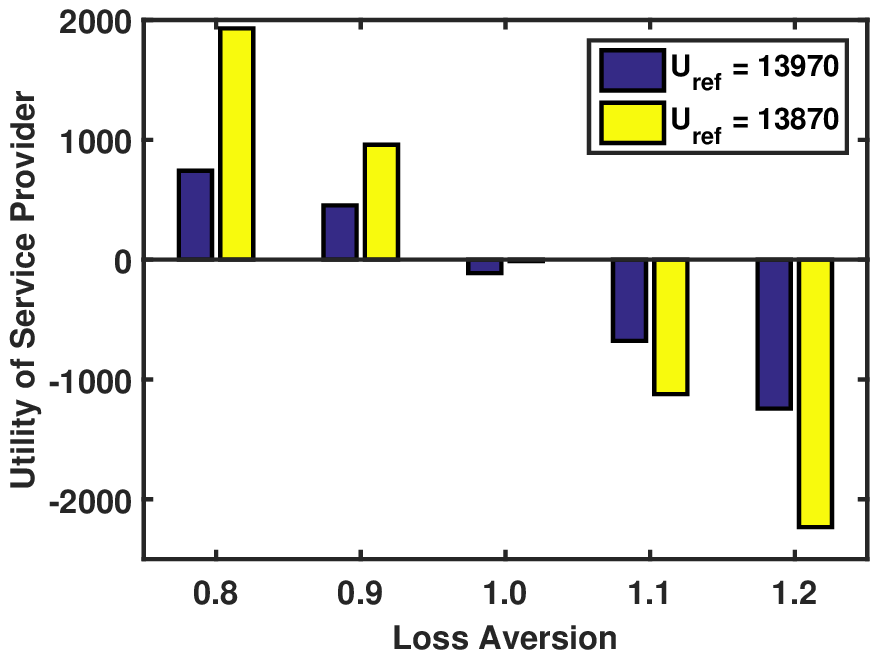}}
	\subfloat[Utilities of workers.]
	{\label{fig:PTUm_a}\includegraphics[width = 0.24\textwidth]{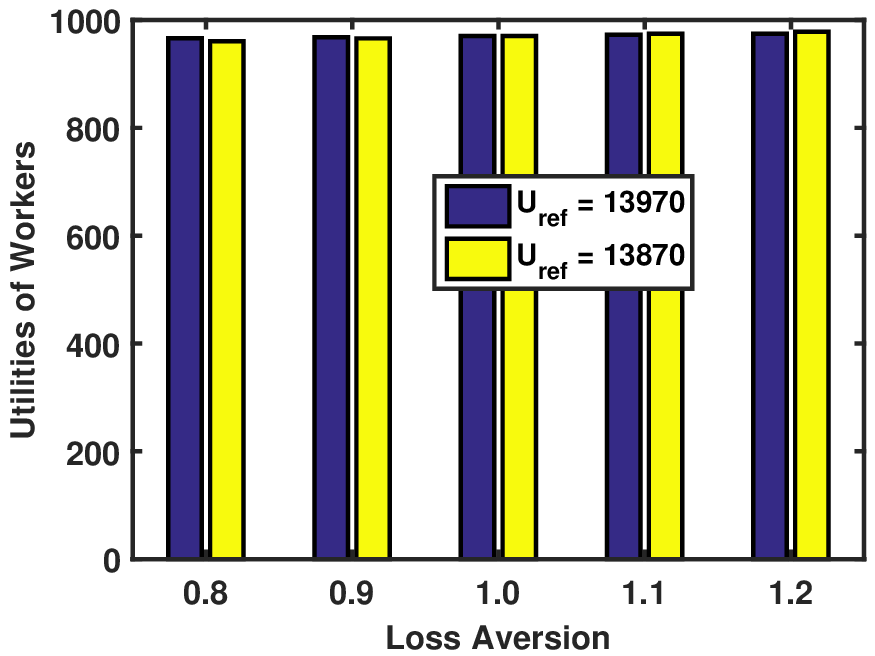}}
        \captionsetup{font = small}
	\caption{Utilities of the service provider and workers under different loss aversions $\eta$ in \textbf{\emph{Case 2}}.}\label{fig:PT_utility_a}
\end{figure}
\begin{figure}[t]
	\centering
	\subfloat[ The utility of the service provider.]
	{\includegraphics[width=0.24\textwidth]{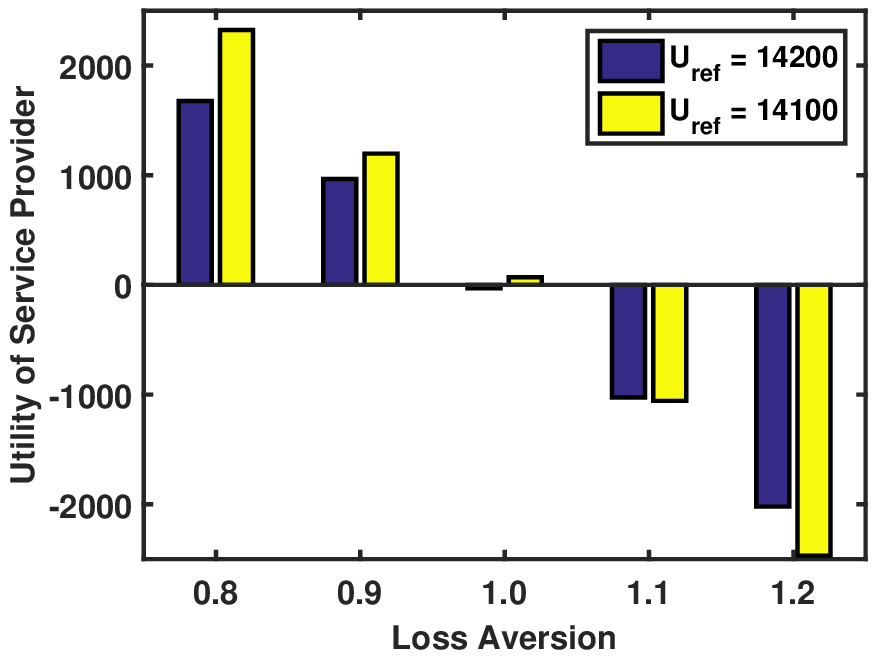}\label{fig:PTUs_c}}
	\subfloat[  Utilities of workers.]
	{\includegraphics[width=0.24\textwidth]{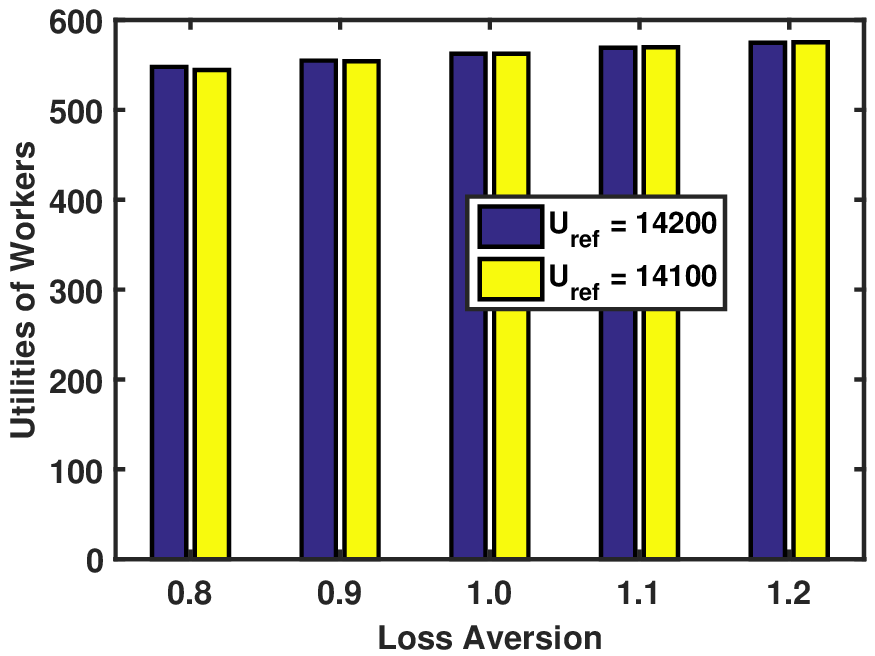}\label{fig:PTUm_c}}
        \captionsetup{font = small}
	\caption{ Utilities of the service provider and workers under different loss aversions $\eta$ in \textbf{\emph{Case 1}}.}\label{fig:PT_utility_c}
\end{figure}

\begin{figure}[t]
	\centering
	\subfloat[ \textbf{\emph{Case 2}}.]
	{\includegraphics[width=0.24\textwidth]{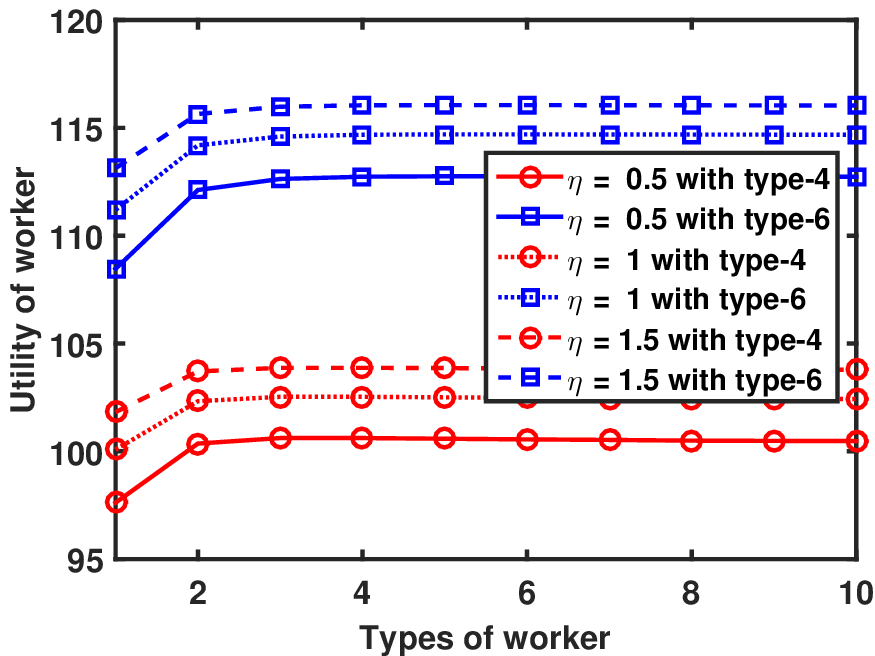}\label{fig:case2_type}}
	\subfloat[ \textbf{\emph{Case 1}}.]
	{\includegraphics[width=0.24\textwidth]{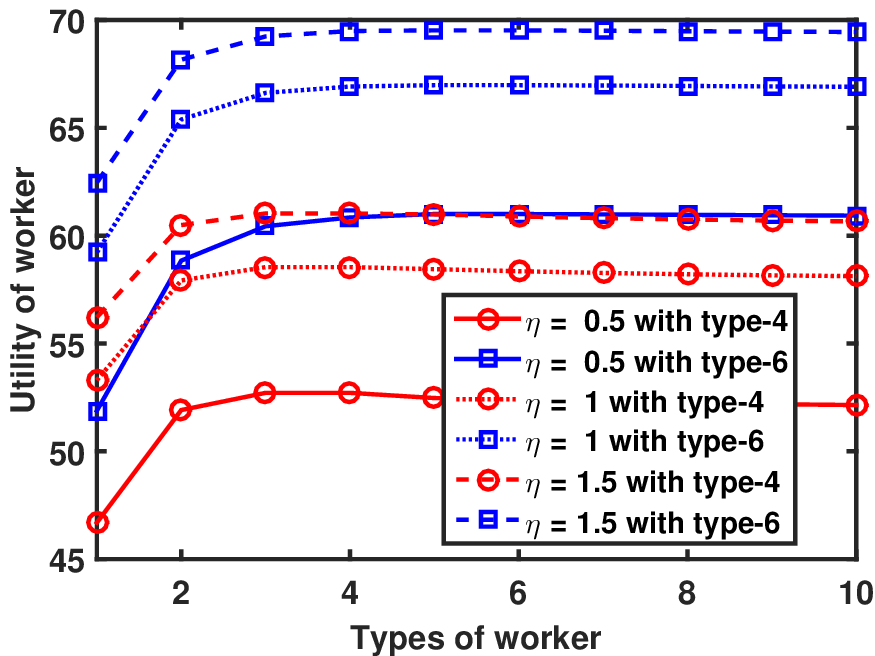}\label{fig:case1_type}}
        \captionsetup{font = small}
	\caption{ Validation of contract properties with the utilities of type-$4$ and type-$6$ workers under different loss aversions.}\label{fig:case_type}
\end{figure}

\begin{figure*}[t]
	\centering
	\subfloat[ Prediction accuracy.]
	{\includegraphics[width=0.24\textwidth]{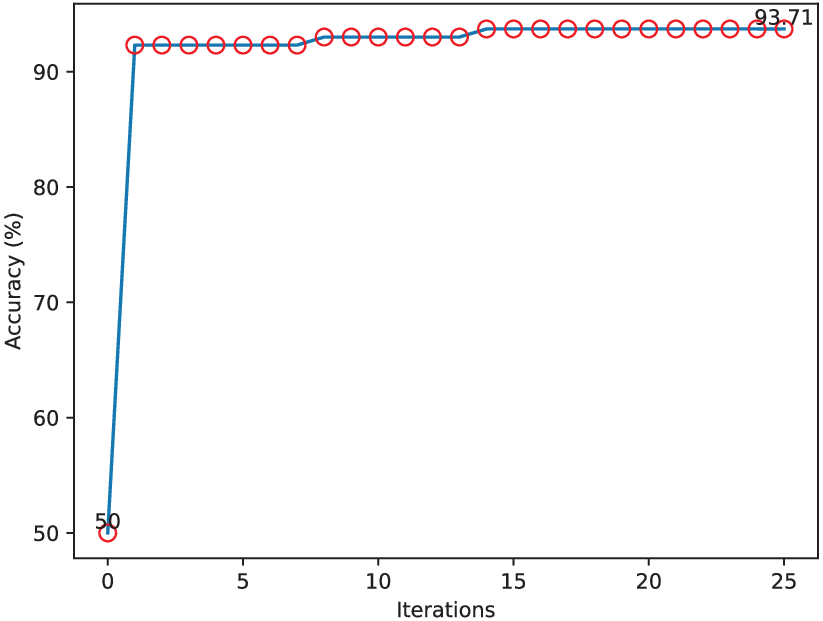}\label{accuracy}}
	\subfloat[ Time spent by vertical FL.]
	{\includegraphics[width=0.24\textwidth]{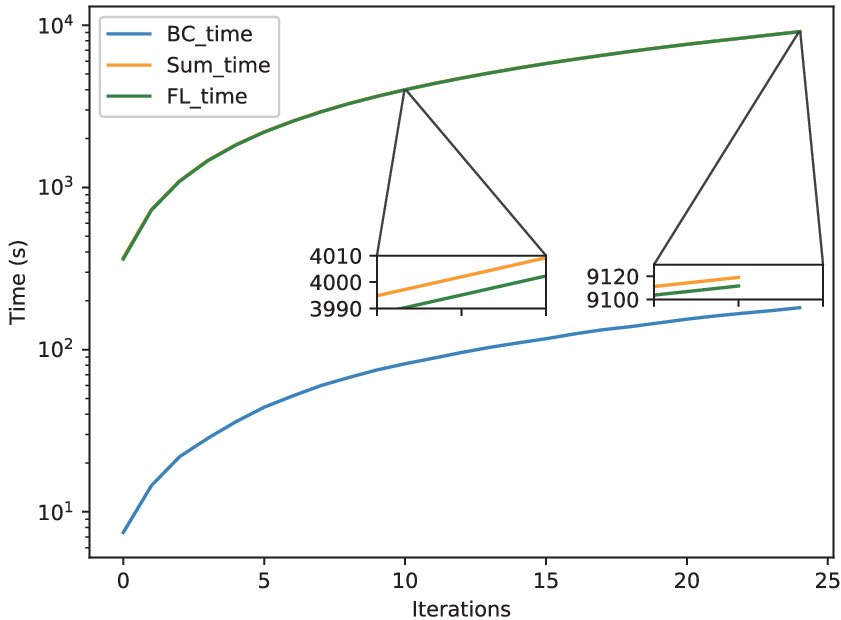}\label{time}}
	\subfloat[ Data storage distribution.]
	{\includegraphics[width=0.24\textwidth]{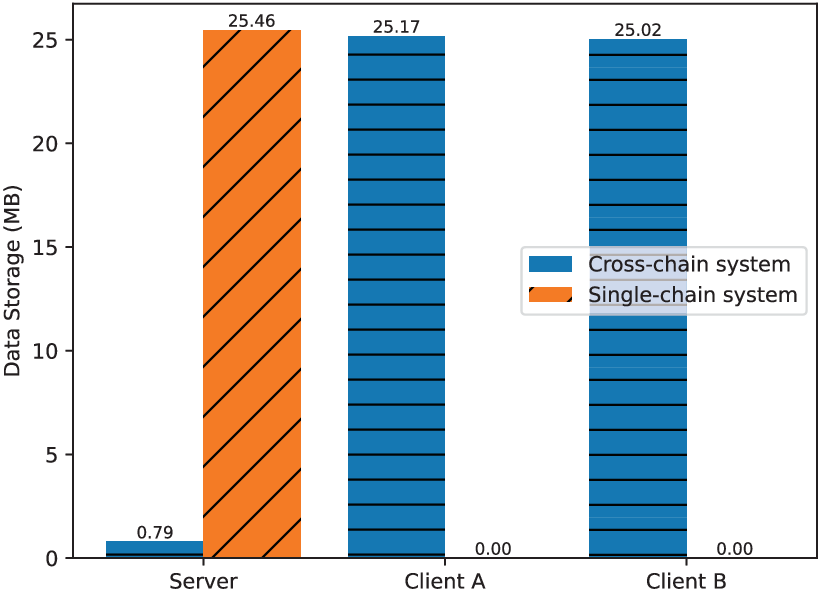}\label{storage}}	
	\subfloat[ Consensus time.]
	{\includegraphics[width=0.24\textwidth]{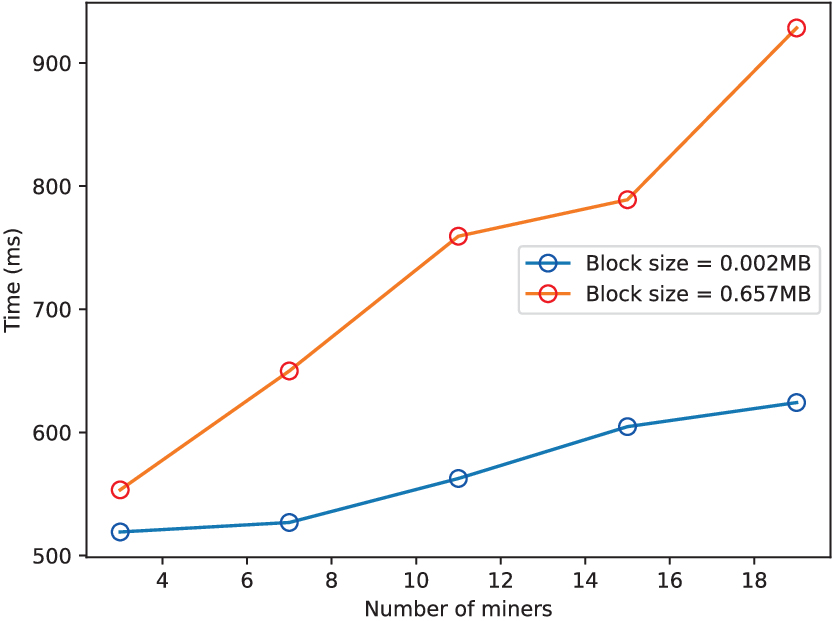}\label{c_time}}
        \captionsetup{font = small}
	\caption{The performance of the cross-chain empowered FL framework.}\label{performance}
\end{figure*}
\begin{figure*}[t]\centering     
\includegraphics[width=18cm, height=13cm]{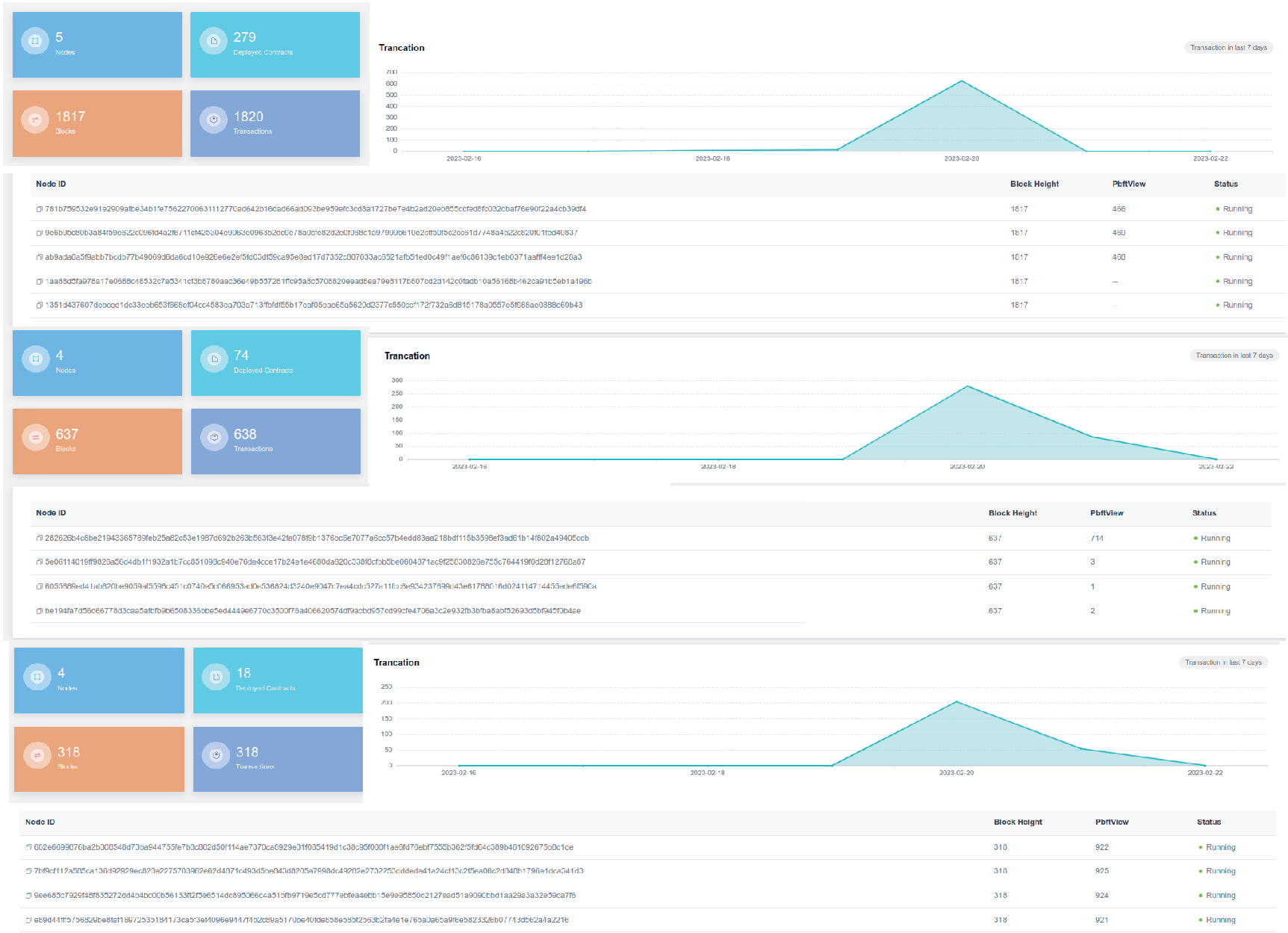}
\captionsetup{font = small}
\caption{Transaction information in three blockchains. Without loss of generality, we set two subchains that are used to store the local models on client A and client B separately and one main chain that is used to store data on the server.}  \label{webase}    
\end{figure*}

\subsection{Performance Analysis of the Proposed Incentive Mechanism}
Figure \ref{fig:U_parameter_a} shows the utilities of the service provider and workers in terms of the idle duration parameter $a$ under different incentive mechanisms. In Fig. \ref{fig:U_parameter_a} (a), with the parameter $a$ increasing, the utility of the service provider first increases and then decreases, which indicates that there exists an optimal parameter $a$ for maximizing the utility of the service provider. For a given parameter $a$, the service provider under the CC mechanism has the highest utility. The reason is that the service provider knows the exact type information of workers and thus offers the most suitable contract item for each worker. Besides, the service provider under the CS mechanism gets a lower utility than that under the proposed CA mechanism. The reason is that the CS mechanism tends to maximize social welfare so that it reaches a balance between the utility of the service provider and that of workers\cite{ye2022incentivizing}. In summary, our proposed CA mechanism allows the service provider to achieve the highest utility under asymmetric information. 
Although the SG mechanism aims at maximizing the objective utilities of both the service provider and workers, the service provider under the SG mechanism gets the lowest utility. The reason is that for all types of workers, the SG mechanism allows only the workers with the highest four types to participate in the FL task under the Nash equilibrium\cite{ye2022incentivizing}. In Fig. \ref{fig:U_parameter_a} (b), with the parameter $a$ increasing, the utilities of workers decrease. We can find that the workers have the highest utility under the CS mechanism and the lowest utility under the CC mechanism. Moreover, the utilities of workers under the SG mechanism are between the utilities of the proposed CA mechanism and the CC mechanism. These results further indicate that our proposed CA mechanism has the highest performance under asymmetric information.


Figure \ref{fig:PT_solution_c} shows the impacts of preference parameters on PT and EUT-based solutions, namely the reference point $U_{\text{ref}}$ and the loss aversion $\eta$. We compare the performance of PT and EUT-based solutions for the proposed CA scheme. The reference point $U_{\text{ref}}$ can affect the PT-based solution of $f_{n}$ for each worker type, which is different from the EUT-based solution of $f_{n}$. As shown in Fig. \ref{fig:PT_solution_c} (a) and Fig. \ref{fig:PT_solution_c} (b), the larger $U_{\text{ref}}$, the more workers with low types improve the subjective utility of the service provider by adjusting their PT-based solution of $f_{n}$. For example, when $U_{\text{ref}} = 13970$ and $\eta = 0.5$, the service provider adjusts the optimal update frequency and the corresponding reward for worker types $1, 2, 3, 4, 5, 6$, and $7$. However, when $U_{\text{ref}}$ is reduced to $ 13870 $ and $\eta$ is unchanged, the service provider only adjusts the optimal update frequency and the corresponding reward for worker types $1, 2, 3, 4, 5$, and $6$. With given a reference point $U_{\text{ref}}$ and a risk-averse behavior (i.e., $\eta>1$), the PT-based solution of each worker is always better than its corresponding EUT-based solution. In turn, with a fixed reference point $U_{\text{ref}}$, when the service provider has a risk-preferred behavior (i.e., $\eta<1$), the PT-based solution of workers with low types (e.g., type-$1$, type-$2$, and type-$3$) is worse than their corresponding EUT-based solutions. 

Figure \ref{fig:PT_utility_a} shows the impacts of preference parameters on the utilities of the service provider and workers. With a given reference point $U_{\text{ref}}$, as the loss aversion $\eta$ increases, the subjective utility of the service provider decreases while the objective utilities of workers increase. The reason is that the increase of the loss aversion $\eta$ means that the service provider tends to have a risk-averse behavior. Therefore, the service provider needs more update frequency from workers with higher types to avoid utility losses, which can increase the objective utilities of workers. Besides, the more update frequency indicates that the service provider needs to send more rewards to  workers, which reduces the subjective utility of the service provider. When the loss aversion $\eta$ is fixed and $\eta < 1$, with the increase of the reference point $U_{\text{ref}}$, the subjective utility of the service provider decreases while the objective utilities of workers increase. In turn, when $\eta > 1$, with the increase of the reference point $U_{\text{ref}}$, the subjective utility of the service provider increases while the objective utilities of workers decrease.

Figure \ref{fig:case_type} shows the validation of contract properties in both two cases. Figure \ref{fig:case_type} (a) shows the utilities of type-$4$ and type-$6$ workers under different loss aversions $\eta$ when selecting all the contract items $(f_{n}, R_{n}),\: n \in \mathcal{N} $ offered by the service provider. In Fig. \ref{fig:case_type} (a), we find that when the service provider has a risk-tolerance behavior (i.e., $\eta = 0.5$), the objective utility of each worker is lower than that when the service provider has a risk-neutral behavior (i.e., $\eta = 1.0$), and the objective utility of each worker is the highest when the service provider has a risk-averse behavior (i.e., $\eta = 1.5$). When the loss aversion $\eta$ is fixed, we can see that each type of worker receives a positive utility when selecting the contract item that fits its type, which demonstrates that our designed contract guarantees the IR condition. Furthermore, each worker can maximize its utility when selecting the contract item that fits its type, which demonstrates that our designed contract guarantees the IC condition. Therefore, we validate that our proposed CA scheme satisfies the IR and IC conditions. Additionally, the utilities of higher types of workers are larger than those of lower types of workers, which demonstrates \textbf{Lemma \ref{lemma:1}}. Based on the above analysis, we can conclude that the service provider can overcome the problem of asymmetric information between the service provider and workers by utilizing the proposed CA scheme.

\subsection{Performance Analysis of the Cross-chain Empowered FL Framework}

Figure \ref{performance} (a) shows the accuracy of FL for the prediction of breast cancer. 
Since users have different numbers and types of features in the dataset, vertical FL training is performed for the prediction of breast cancer. After $25$ iterations, the prediction accuracy can be reached at $93.71\%$, which demonstrates that our proposed cross-chain empowered FL framework has good performance. Figure \ref{performance} (b) shows the time spent by vertical FL in $25$ iterations. The time spent in the blockchain (e.g., consensus time and cross-chain interaction time, etc.) is $181.3\:\rm{s}$, the time spent in the local model training (e.g., the data process and homomorphic encryption, etc.) is $8930.3\:\rm{s}$, and the total time spent in the whole system is $9111.6\:\rm{s}$. Note that the homomorphic encryption algorithm \textit{Paillier} is used for ensuring the security of the training process\cite{fang2021privacy,paillier1999public}. With the iterations increasing, the time consumption of the proposed system increases nonlinearly, and the local model training takes up much time because of homomorphic encryption.

Figure \ref{performance} (c) shows the storage distribution of completing a global iteration in the single-chain system and the cross-chain system. We can see that the total storage of the single-chain system is $25.461\:\rm{MB}$, and the total storage of the cross-chain system is $50.981\:\rm{MB}$, which is the sum of storage on the server, client A, and client B. Although the total storage of the cross-chain system is almost twice as much as the storage of the single-chain system, the storage on the server in the cross-chain system is only $0.787\:\rm{MB}$, which is about $3.09\%$ of the storage in the single-chain system. Therefore, in our cross-chain empowered FL system, the storage pressure on the server is greatly reduced, which allows more clients (i.e., workers) to join FL training. Figure \ref{performance} (d) shows consensus time corresponding to different numbers of miners on the server under different block sizes. From Fig. \ref{performance} (d), we can find that as the number of miners increases, the consensus time increases. Besides, the bigger the block, the more consensus time. Based on the above analysis, the server has higher consensus efficiency due to less stored data in the cross-chain empowered FL system, which indicates the good performance of our proposed system.

\section{Conclusion}\label{Conclusion}
In this paper, we have studied user privacy protection issues and incentive mechanism design for healthcare metaverses. We have proposed a user-centric privacy-preserving framework based on FL technologies for data training in both the virtual space and the physical space of the healthcare metaverse. To ensure secure, decentralized, and privacy-preserving model training, we have designed a decentralized FL architecture based on cross-chain technologies, which consists of a main chain and multiple subchains. Additionally, to improve the service quality of time-sensitive learning tasks, we have applied the AoI as a data-freshness metric and designed an AoI-contract model for incentivizing fresh sensing data sharing in a user-centric manner. Furthermore, we have utilized PT to capture the utility of the service provider considering decision-making under risks and uncertainty. Finally, numerical results demonstrate the effectiveness and reliability of the incentive mechanism and the proposed cross-chain empowered FL framework for healthcare metaverses. For future work, we will further enhance the security and performance of our proposed cross-chain empowered FL framework by considering specific features of health data for healthcare metaverses. Besides, we will use AI tools like deep reinforcement learning or the diffusion model to enhance the solution methodology of the AoI-based contract model under PT.

\bibliographystyle{IEEEtran}
\bibliography{ref}

\begin{thebibliography}{10}
\providecommand{\url}[1]{#1}
\csname url@samestyle\endcsname
\providecommand{\newblock}{\relax}
\providecommand{\bibinfo}[2]{#2}
\providecommand{\BIBentrySTDinterwordspacing}{\spaceskip=0pt\relax}
\providecommand{\BIBentryALTinterwordstretchfactor}{4}
\providecommand{\BIBentryALTinterwordspacing}{\spaceskip=\fontdimen2\font plus
\BIBentryALTinterwordstretchfactor\fontdimen3\font minus
  \fontdimen4\font\relax}
\providecommand{\BIBforeignlanguage}[2]{{%
\expandafter\ifx\csname l@#1\endcsname\relax
\typeout{** WARNING: IEEEtran.bst: No hyphenation pattern has been}%
\typeout{** loaded for the language `#1'. Using the pattern for}%
\typeout{** the default language instead.}%
\else
\language=\csname l@#1\endcsname
\fi
#2}}
\providecommand{\BIBdecl}{\relax}
\BIBdecl

\bibitem{GARAVAND2022101029}
\BIBentryALTinterwordspacing
A.~Garavand and N.~Aslani, ``Metaverse phenomenon and its impact on health: A
  scoping review,'' \emph{Informatics in Medicine Unlocked}, vol.~32, p.
  101029, 2022. [Online]. Available:
  \url{https://www.sciencedirect.com/science/article/pii/S235291482200171X}
\BIBentrySTDinterwordspacing

\bibitem{9880528}
Y.~Wang, Z.~Su, N.~Zhang, R.~Xing, D.~Liu, T.~H. Luan, and X.~Shen, ``A survey
  on metaverse: Fundamentals, security, and privacy,'' \emph{IEEE
  Communications Surveys \& Tutorials}, vol.~25, no.~1, pp. 319--352, 2023.

\bibitem{kurniasih2022digital}
D.~Kurniasih, P.~I. Setyoko, and A.~S. Saputra, ``Digital transformation of
  health quality services in the healthcare industry during disruption and
  society 5.0 era,'' \emph{International Journal of Social and Management
  Studies}, vol.~3, no.~5, pp. 139--143, 2022.

\bibitem{chengoden2022metaverse}
R.~Chengoden, N.~Victor, T.~Huynh-The, G.~Yenduri, R.~H. Jhaveri, M.~Alazab,
  S.~Bhattacharya, P.~Hegde, P.~K.~R. Maddikunta, and T.~R. Gadekallu,
  ``Metaverse for healthcare: A survey on potential applications, challenges
  and future directions,'' \emph{arXiv preprint arXiv:2209.04160}, 2022.

\bibitem{xue2023integration}
H.~Xue, D.~Chen, N.~Zhang, H.-N. Dai, and K.~Yu, ``Integration of blockchain
  and edge computing in internet of things: A survey,'' \emph{Future Generation
  Computer Systems}, vol. 144, pp. 307--326, 2023.

\bibitem{wang2022development}
G.~Wang, A.~Badal, X.~Jia, J.~S. Maltz, K.~Mueller, K.~J. Myers, C.~Niu,
  M.~Vannier, P.~Yan, Z.~Yu \emph{et~al.}, ``Development of metaverse for
  intelligent healthcare,'' \emph{Nature Machine Intelligence}, vol.~4, no.~11,
  pp. 922--929, 2022.

\bibitem{kostick2022nfts}
K.~Kostick-Quenet, K.~D. Mandl, T.~Minssen, I.~G. Cohen, U.~Gasser, I.~Kohane,
  and A.~L. McGuire, ``How nfts could transform health information exchange,''
  \emph{Science}, vol. 375, no. 6580, pp. 500--502, 2022.

\bibitem{zhang2023multi}
T.~Zhang, J.~Shen, C.-F. Lai, S.~Ji, and Y.~Ren, ``Multi-server assisted data
  sharing supporting secure deduplication for metaverse healthcare systems,''
  \emph{Future Generation Computer Systems}, vol. 140, pp. 299--310, 2023.

\bibitem{9881813}
J.~Kang, D.~Ye, J.~Nie, J.~Xiao, X.~Deng, S.~Wang, Z.~Xiong, R.~Yu, and
  D.~Niyato, ``Blockchain-based federated learning for industrial metaverses:
  Incentive scheme with optimal aoi,'' in \emph{2022 IEEE International
  Conference on Blockchain (Blockchain)}, 2022, pp. 71--78.

\bibitem{zhan2020learning}
Y.~Zhan, P.~Li, Z.~Qu, D.~Zeng, and S.~Guo, ``A learning-based incentive
  mechanism for federated learning,'' \emph{IEEE Internet of Things Journal},
  vol.~7, no.~7, pp. 6360--6368, 2020.

\bibitem{8818983}
L.~Zhu, H.~Dong, M.~Shen, and K.~Gai, ``An incentive mechanism using shapley
  value for blockchain-based medical data sharing,'' in \emph{2019 IEEE 5th
  Intl Conference on Big Data Security on Cloud (BigDataSecurity), IEEE Intl
  Conference on High Performance and Smart Computing, (HPSC) and IEEE Intl
  Conference on Intelligent Data and Security (IDS)}, 2019, pp. 113--118.

\bibitem{nie2022blockchain}
X.~Nie, A.~Zhang, J.~Chen, Y.~Qu, and S.~Yu, ``Blockchain-empowered secure and
  privacy-preserving health data sharing in edge-based iomt,'' \emph{Security
  and Communication Networks}, vol. 2022, 2022.

\bibitem{8832210}
J.~Kang, Z.~Xiong, D.~Niyato, S.~Xie, and J.~Zhang, ``Incentive mechanism for
  reliable federated learning: A joint optimization approach to combining
  reputation and contract theory,'' \emph{IEEE Internet of Things Journal},
  vol.~6, no.~6, pp. 10\,700--10\,714, 2019.

\bibitem{wenoptimal}
J.~Wen, X.~Liu, Z.~Xiong, M.~Shen, S.~Wang, Y.~Jiao, J.~Kang, and H.~Li,
  ``Optimal block propagation and incentive mechanism for blockchain networks
  in 6g,'' in \emph{2022 IEEE International Conference on Trust, Security and
  Privacy in Computing and Communications (TrustCom)}, 2022, pp. 369--374.

\bibitem{kahneman2013prospect}
D.~Kahneman and A.~Tversky, ``Prospect theory: An analysis of decision under
  risk,'' in \emph{Handbook of the fundamentals of financial decision making:
  Part I}.\hskip 1em plus 0.5em minus 0.4em\relax World Scientific, 2013, pp.
  99--127.

\bibitem{huang2021efficient}
X.~Huang, R.~Yu, D.~Ye, L.~Shu, and S.~Xie, ``Efficient workload allocation and
  user-centric utility maximization for task scheduling in collaborative
  vehicular edge computing,'' \emph{IEEE Transactions on Vehicular Technology},
  vol.~70, no.~4, pp. 3773--3787, 2021.

\bibitem{8031035}
G.~El~Rahi, S.~R. Etesami, W.~Saad, N.~B. Mandayam, and H.~V. Poor, ``Managing
  price uncertainty in prosumer-centric energy trading: A prospect-theoretic
  stackelberg game approach,'' \emph{IEEE Transactions on Smart Grid}, vol.~10,
  no.~1, pp. 702--713, 2019.

\bibitem{bansal2022healthcare}
G.~Bansal, K.~Rajgopal, V.~Chamola, Z.~Xiong, and D.~Niyato, ``Healthcare in
  metaverse: A survey on current metaverse applications in healthcare,''
  \emph{Ieee Access}, vol.~10, pp. 119\,914--119\,946, 2022.

\bibitem{ali2023metaverse}
S.~Ali, T.~P.~T. Armand, A.~Athar, A.~Hussain, M.~Ali, M.~Yaseen, M.-I. Joo,
  H.-C. Kim \emph{et~al.}, ``Metaverse in healthcare integrated with
  explainable ai and blockchain: Enabling immersiveness, ensuring trust, and
  providing patient data security,'' \emph{Sensors}, vol.~23, no.~2, p. 565,
  2023.

\bibitem{8994206}
J.~Kang, Z.~Xiong, D.~Niyato, Y.~Zou, Y.~Zhang, and M.~Guizani, ``Reliable
  federated learning for mobile networks,'' \emph{IEEE Wireless
  Communications}, vol.~27, no.~2, pp. 72--80, 2020.

\bibitem{zheng2018blockchain}
Z.~Zheng, S.~Xie, H.-N. Dai, X.~Chen, and H.~Wang, ``Blockchain challenges and
  opportunities: A survey,'' \emph{International journal of web and grid
  services}, vol.~14, no.~4, pp. 352--375, 2018.

\bibitem{chang2021blockchain}
Y.~Chang, C.~Fang, W.~Sun \emph{et~al.}, ``A blockchain-based federated
  learning method for smart healthcare,'' \emph{Computational Intelligence and
  Neuroscience}, vol. 2021, 2021.

\bibitem{jatain2022blockchain}
D.~Jatain, V.~Singh, and N.~Dahiya, ``Blockchain base community
  cluster-federated learning for secure aggregation of healthcare data,''
  \emph{Procedia Computer Science}, vol. 215, pp. 752--762, 2022.

\bibitem{wadhwa2022blockchain}
S.~Wadhwa, K.~Saluja, S.~Gupta, and D.~Gupta, ``Blockchain based federated
  learning approach for detection of covid-19 using io mt,'' \emph{Available at
  SSRN 4159195}, 2022.

\bibitem{bolton2004contract}
P.~Bolton and M.~Dewatripont, \emph{Contract theory}.\hskip 1em plus 0.5em
  minus 0.4em\relax MIT press, 2004.

\bibitem{hou2017incentive}
Z.~Hou, H.~Chen, Y.~Li, and B.~Vucetic, ``Incentive mechanism design for
  wireless energy harvesting-based internet of things,'' \emph{IEEE Internet of
  Things Journal}, vol.~5, no.~4, pp. 2620--2632, 2017.

\bibitem{Jinbo}
J.~Wen, J.~Kang, Z.~Xiong, Y.~Zhang, H.~Du, Y.~Jiao, and D.~Niyato, ``Task
  freshness-aware incentive mechanism for vehicle twin migration in vehicular
  metaverses,'' in \emph{IEEE International Conference on Metaverse Computing,
  Networking and Applications (IEEE MetaCom 2023)}.\hskip 1em plus 0.5em minus
  0.4em\relax IEEE, 2023, p. In press.

\bibitem{zhang2022toward}
C.~Zhang, T.~Shen, and F.~Bai, ``Toward secure data sharing for the iot devices
  with limited resources: A smart contract--based quality-driven incentive
  mechanism,'' \emph{IEEE Internet of Things Journal}, 2022.

\bibitem{xuan2020incentive}
S.~Xuan, L.~Zheng, I.~Chung, W.~Wang, D.~Man, X.~Du, W.~Yang, and M.~Guizani,
  ``An incentive mechanism for data sharing based on blockchain with smart
  contracts,'' \emph{Computers \& Electrical Engineering}, vol.~83, p. 106587,
  2020.

\bibitem{karimireddy2022mechanisms}
S.~P. Karimireddy, W.~Guo, and M.~I. Jordan, ``Mechanisms that incentivize data
  sharing in federated learning,'' \emph{arXiv preprint arXiv:2207.04557},
  2022.

\bibitem{kaul2012real}
S.~Kaul, R.~Yates, and M.~Gruteser, ``Real-time status: How often should one
  update?'' in \emph{2012 Proceedings IEEE INFOCOM}.\hskip 1em plus 0.5em minus
  0.4em\relax IEEE, 2012, pp. 2731--2735.

\bibitem{kosta2017age}
A.~Kosta, N.~Pappas, V.~Angelakis \emph{et~al.}, ``Age of information: A new
  concept, metric, and tool,'' \emph{Foundations and Trends{\textregistered} in
  Networking}, vol.~12, no.~3, pp. 162--259, 2017.

\bibitem{zhou2021towards}
X.~Zhou, W.~Wang, N.~U. Hassan, C.~Yuen, and D.~Niyato, ``Towards small aoi and
  low latency via operator content platform: A contract theory-based pricing,''
  \emph{IEEE Transactions on Communications}, vol.~70, no.~1, pp. 366--378,
  2021.

\bibitem{lim2020information}
W.~Y.~B. Lim, Z.~Xiong, J.~Kang, D.~Niyato, C.~Leung, C.~Miao, and X.~Shen,
  ``When information freshness meets service latency in federated learning: A
  task-aware incentive scheme for smart industries,'' \emph{IEEE Transactions
  on Industrial Informatics}, vol.~18, no.~1, pp. 457--466, 2020.

\bibitem{kang2022communication}
J.~Kang, X.~Li, J.~Nie, Y.~Liu, M.~Xu, Z.~Xiong, D.~Niyato, and Q.~Yan,
  ``Communication-efficient and cross-chain empowered federated learning for
  artificial intelligence of things,'' \emph{IEEE Transactions on Network
  Science and Engineering}, vol.~9, no.~5, pp. 2966--2977, 2022.

\bibitem{jin2022towards}
H.~Jin and J.~Xiao, ``Towards trustworthy blockchain systems in the era of
  “internet of value”: development, challenges, and future trends,''
  \emph{Science China Information Sciences}, vol.~65, pp. 1--11, 2022.

\bibitem{jin2021cross}
H.~Jin, X.~Dai, J.~Xiao, B.~Li, H.~Li, and Y.~Zhang, ``Cross-cluster federated
  learning and blockchain for internet of medical things,'' \emph{IEEE Internet
  of Things Journal}, vol.~8, no.~21, pp. 15\,776--15\,784, 2021.

\bibitem{shen2020blockchain}
M.~Shen, H.~Liu, L.~Zhu, K.~Xu, H.~Yu, X.~Du, and M.~Guizani,
  ``Blockchain-assisted secure device authentication for cross-domain
  industrial iot,'' \emph{IEEE Journal on Selected Areas in Communications},
  vol.~38, no.~5, pp. 942--954, 2020.

\bibitem{zhang2018towards}
S.~Zhang, J.~Li, H.~Luo, J.~Gao, L.~Zhao, and X.~S. Shen, ``Towards fresh and
  low-latency content delivery in vehicular networks: An edge caching aspect,''
  in \emph{2018 10th International Conference on Wireless Communications and
  Signal Processing (WCSP)}.\hskip 1em plus 0.5em minus 0.4em\relax IEEE, 2018,
  pp. 1--6.

\bibitem{tang2018multi}
L.~Tang and S.~He, ``Multi-user computation offloading in mobile edge
  computing: A behavioral perspective,'' \emph{IEEE Network}, vol.~32, no.~1,
  pp. 48--53, 2018.

\bibitem{ye2022incentivizing}
D.~Ye, X.~Huang, Y.~Wu, and R.~Yu, ``Incentivizing semisupervised vehicular
  federated learning: A multidimensional contract approach with bounded
  rationality,'' \emph{IEEE Internet of Things Journal}, vol.~9, no.~19, pp.
  18\,573--18\,588, 2022.

\bibitem{jiang2021reliable}
Y.~Jiang, J.~Kang, D.~Niyato, X.~Ge, Z.~Xiong, and C.~Miao, ``Reliable coded
  distributed computing for metaverse services: Coalition formation and
  incentive mechanism design,'' \emph{arXiv preprint arXiv:2111.10548}, 2021.

\bibitem{chen2018efficient}
Y.~Chen, S.~He, F.~Hou, Z.~Shi, and J.~Chen, ``An efficient incentive mechanism
  for device-to-device multicast communication in cellular networks,''
  \emph{IEEE Transactions on Wireless Communications}, vol.~17, no.~12, pp.
  7922--7935, 2018.

\bibitem{gao2011spectrum}
L.~Gao, X.~Wang, Y.~Xu, and Q.~Zhang, ``Spectrum trading in cognitive radio
  networks: A contract-theoretic modeling approach,'' \emph{IEEE Journal on
  Selected Areas in Communications}, vol.~29, no.~4, pp. 843--855, 2011.

\bibitem{paillier1999public}
P.~Paillier, ``Public-key cryptosystems based on composite degree residuosity
  classes,'' in \emph{Advances in Cryptology—EUROCRYPT’99: International
  Conference on the Theory and Application of Cryptographic Techniques Prague,
  Czech Republic, May 2--6, 1999 Proceedings 18}.\hskip 1em plus 0.5em minus
  0.4em\relax Springer, 1999, pp. 223--238.

\bibitem{xiong2020multi}
Z.~Xiong, J.~Kang, D.~Niyato, P.~Wang, H.~V. Poor, and S.~Xie, ``A multi
  dimensional contract approach for data rewarding in mobile networks,''
  \emph{IEEE Transactions on Wireless Communications}, vol.~19, no.~9, pp.
  5779--5793, 2020.

\bibitem{li2020scalable}
W.~Li, C.~Feng, L.~Zhang, H.~Xu, B.~Cao, and M.~A. Imran, ``A scalable
  multi-layer pbft consensus for blockchain,'' \emph{IEEE Transactions on
  Parallel and Distributed Systems}, vol.~32, no.~5, pp. 1146--1160, 2020.

\bibitem{kang2017enabling}
J.~Kang, R.~Yu, X.~Huang, S.~Maharjan, Y.~Zhang, and E.~Hossain, ``Enabling
  localized peer-to-peer electricity trading among plug-in hybrid electric
  vehicles using consortium blockchains,'' \emph{IEEE Transactions on
  Industrial Informatics}, vol.~13, no.~6, pp. 3154--3164, 2017.

\bibitem{fang2021privacy}
H.~Fang and Q.~Qian, ``Privacy preserving machine learning with homomorphic
  encryption and federated learning,'' \emph{Future Internet}, vol.~13, no.~4,
  p.~94, 2021.

\end{thebibliography}
\end{document}